\documentclass[12pt,reqno]{amsart}
\usepackage{amsmath,amsfonts,amsthm,amssymb,bm,bbold,dsfont,hyperref,times,enumerate}
\usepackage{parskip,xcolor}

\newtheorem{thm}{Theorem}[section]
\newtheorem*{thm*}{Theorem}
\newtheorem{lemma}[thm]{Lemma}
\newtheorem{prop}[thm]{Proposition}
\newtheorem{cor}[thm]{Corollary}

\newtheorem*{rmk*}{Remark}

\newcommand{\be}{\begin{equation}}
\newcommand{\ee}{\end{equation}}

\def\lam0{{\lambda+i0}}

\pagestyle{myheadings}

\makeatletter

\newcommand{\R}{\mathbb R}
\newcommand{\Z}{\mathbb Z}
\newcommand{\E}{\mathbb E}
\newcommand{\C}{\mathbb C}
\newcommand{\pr}{\mathbb P}
\newcommand{\tr}{\mbox{tr }}
\newcommand{\abs}[1]{\left\lvert #1 \right\rvert}
\numberwithin{equation}{section}
%
\begin{document}

\title[Eigenvalue counting inequalities with applications]{Eigenvalue counting inequalities, with applications to  Schr\"{o}dinger operators}

\author{Alexander Elgart}
\address[A. Elgart]{Department of Mathematics; Virginia Tech; Blacksburg, VA, 24061, USA}
 \email{aelgart@vt.edu}
\author{Daniel  Schmidt}
\address[D. Schmidt]{Department of Mathematics; Virginia Tech; Blacksburg, VA, 24061, USA}
 \email{dfs@vt.edu}

\begin{abstract}
We derive a sufficient condition for a Hermitian $N\times N$ matrix $A$ to have at least $m$ eigenvalues (counting multiplicities) in the interval $(-\epsilon, \epsilon)$. This condition is expressed in terms of the existence of a principal $(N-2m)\times (N-2m) $ submatrix of $A$ whose Schur complement in $A$ has at least $m$ eigenvalues in the interval $(-K\epsilon, K\epsilon)$, with an explicit constant $K$.
  
We apply this result to a random Schr\"odinger operator $H_\omega$, obtaining a criterion that allows us to control the probability of having $m$ closely lying eigenvalues for $H_\omega$---a result known as an $m$-level Wegner estimate.  We demonstrate its usefulness by verifying the input condition of our criterion for some physical models.  These include the Anderson model and random block operators that arise in the Bogoliubov-de Gennes theory of dirty superconductors.  

\end{abstract}
\maketitle


\section{Introduction}

\subsection{Small eigenvalues and the Green function}
Let $A$ be an invertible Hermitian $N\times N$ matrix with inverse $A^{-1}$, and let $I_N$ be the $N\times N$ identity matrix. Let $G(x,y)$ denote the matrix element in the $(x,y)$ position of $A^{-1}$, also known as the Green function of $A$.   Our first  objective in this work is to relate information about the small eigenvalues of $A$ to the behavior of $G(x,y)$.  Let us denote by $C_\epsilon(A)$ the number of eigenvalues (counting multiplicities) of $A$ in the interval $I_\epsilon:=(-\epsilon,\epsilon)$. As a first step, let us ask the most basic question: Does $A$ has at least one eigenvalue in the interval $I_\epsilon$?  A well known result in the matrix analysis says that
\[C_\epsilon(A)>0 \ \Longleftrightarrow  \ \|A^{-1}\|>\frac{1}{\epsilon}.\]
Since $\|B\|_{\text{max}}\leq \|B\|\leq N\|B\|_{\text{max}}$ for any $N\times N$ matrix $B$ with
\[\|B\|_{{{\text{max}}}}=\max_{x,y}|B(x,y)|,\] we obtain the relations
\be\label{eq:basic1}C_\epsilon(A)>0  \Rightarrow   \mbox{There exists a pair } \{x,y \}\mbox{ such that }|G(x,y)|>\frac{1}{N\epsilon}\,;\ee
\[ |G(x,y)|>\frac{1}{\epsilon} ~~ \mbox{for some pair } \{x,y\} \Rightarrow  C_\epsilon(A)>0\,.\]
It is  natural to try to  quantify these relations further, viz. to detect whether the matrix $A$ has at least $m$ small eigenvalues from the behavior of $G(x,y)$. To this end, we  prove the following result.
\begin{thm}\label{thm:reduction}
Let $A=A^*$ be an $N\times N$ invertible matrix.  Let $A [\alpha,\beta]$ denote the submatrix of $A$ with rows indexed by index subset $\alpha$ and columns indexed by  index subset $\beta$.
Consider the following two assertions:
\begin{enumerate}[I.]
\item\label{i:1}  $C_{\epsilon}(A)\ge m$;
\item\label{i:2}
 There exist  index subsets \[\alpha_m=\{i_1,\ldots,i_m\},\quad  \beta_m=\{j_1,\ldots,j_m\}\] of $\{1,\ldots,N\}$  such that
\begin{equation}\label{eq:Min}
A^{-1}[\alpha_m,\beta_m]A^{-1}[\beta_m,\alpha_m]> \frac{K^2}{\epsilon^2}\,I_N[\alpha_m,\alpha_m]\mbox{\ \  for some }K>0,
\end{equation}
where $I_N$ is the $N\times N$ identity matrix.
\end{enumerate}
Then  \eqref{i:1}  implies \eqref{i:2} with
\be\label{eq:K_2}K=
\frac{C_m}{N},\quad C_m=\frac{1}{m! \, 2^{m-1} }\ee
%

Conversely,
 \eqref{i:2} with $K=1$  implies \eqref{i:1}.
\end{thm}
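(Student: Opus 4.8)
\emph{Plan.} Throughout write $\sigma_1(C)\ge\sigma_2(C)\ge\cdots$ for the singular values of a matrix $C$, and $\sigma_{\min}(C)$ for the smallest one, and recall the standard fact that passing to a submatrix cannot increase any singular value. Two reformulations drive everything. Since $A=A^*$, the matrix $A^{-1}$ is Hermitian and its singular values are the reciprocals of $|\lambda_j(A)|$, so $C_\epsilon(A)\ge m$ is equivalent to $\sigma_m(A^{-1})>1/\epsilon$; and, writing $M:=A^{-1}[\alpha_m,\beta_m]$ and noting $A^{-1}[\beta_m,\alpha_m]=M^*$, assertion \eqref{i:2} says exactly $MM^*>\frac{K^2}{\epsilon^2}I_m$, i.e. $\sigma_{\min}(M)>K/\epsilon$. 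The converse is then immediate: if $\sigma_{\min}(M)>1/\epsilon$, then since $M$ is a submatrix of $A^{-1}$ we get $\sigma_m(A^{-1})\ge\sigma_m(M)=\sigma_{\min}(M)>1/\epsilon$, hence $C_\epsilon(A)\ge m$.

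For the direct implication it is enough to prove the purely linear-algebraic claim and apply it to $B=A^{-1}$: \emph{if $B$ is an invertible $N\times N$ matrix with $\sigma_m(B)=s$, then $B$ has an $m\times m$ submatrix $M$ with $\sigma_{\min}(M)\ge\frac{C_m}{N}s$} (with $\sigma_m(A^{-1})>1/\epsilon$ strictly this produces $\sigma_{\min}(A^{-1}[\alpha_m,\beta_m])>\frac{C_m}{N\epsilon}=\frac K\epsilon$, which is \eqref{i:2}). I would prove the claim by induction on $m$. The case $m=1$ is exactly the elementary bound $\|B\|\le N\|B\|_{\max}$ recalled before \eqref{eq:basic1}: it yields an entry of modulus $\ge\|B\|/N=\sigma_1(B)/N=s/N$, and $C_1=1$.

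For the inductive step choose $(x_1,y_1)$ with $|b_1|:=|B(x_1,y_1)|=\|B\|_{\max}$ (so $|b_1|\ge s/N>0$) and let $\hat B$ be the Schur complement of this single entry in $B$, an invertible $(N-1)\times(N-1)$ matrix with rows indexed by $\{1,\dots,N\}\setminus\{x_1\}$ and columns by $\{1,\dots,N\}\setminus\{y_1\}$. By the standard identity identifying the Schur complement of an entry of $B$ with the inverse of the complementary $(N-1)\times(N-1)$ submatrix of $B^{-1}$ (a form of Jacobi's identity), and monotonicity of singular values under deletion of one row and one column applied to $B^{-1}$ and then inverted, one gets $\sigma_{m-1}(\hat B)\ge\sigma_m(B)=s$. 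The inductive hypothesis applied to $\hat B$ gives index sets $\hat\alpha,\hat\beta$ of size $m-1$ with $\sigma_{\min}(\hat B[\hat\alpha,\hat\beta])\ge\frac{C_{m-1}}{N-1}s\ge\frac{C_{m-1}}{N}s$. Now put $\alpha_m=\hat\alpha\cup\{x_1\}$, $\beta_m=\hat\beta\cup\{y_1\}$. Reordering $x_1,y_1$ to the front, $B[\alpha_m,\beta_m]$ has pivot $b_1$, first-row tail $c^*=B[\{x_1\},\hat\beta]$, first-column tail $d=B[\hat\alpha,\{y_1\}]$, and trailing block $B[\hat\alpha,\hat\beta]$; its Schur complement with respect to $b_1$ is exactly $\hat M:=\hat B[\hat\alpha,\hat\beta]$, so $B[\alpha_m,\beta_m]=L\,\diag(b_1,\hat M)\,U$ with $L,U$ block unit lower/upper triangular whose off-diagonal blocks are $d/b_1$ and $c^*/b_1$. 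Every entry of $c$ and $d$ is an entry of $B$ and hence $\le\|B\|_{\max}=|b_1|$, so $\|d\|,\|c\|\le\sqrt{m-1}\,|b_1|$ and therefore $\|L^{-1}\|,\|U^{-1}\|\le 1+\sqrt{m-1}$. Consequently
\[
\sigma_{\min}(B[\alpha_m,\beta_m])\ \ge\ \frac{\sigma_{\min}\bigl(\diag(b_1,\hat M)\bigr)}{\|L^{-1}\|\,\|U^{-1}\|}\ =\ \frac{\min\bigl(|b_1|,\sigma_{\min}(\hat M)\bigr)}{\|L^{-1}\|\,\|U^{-1}\|}\ \ge\ \frac{(C_{m-1}/N)\,s}{2m}\ =\ \frac{C_m}{N}\,s,
\]
using $|b_1|\ge s/N\ge(C_{m-1}/N)s$ (as $C_{m-1}\le1$), $(1+\sqrt{m-1})^2\le 2m$, and $C_m=C_{m-1}/(2m)$.

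\textbf{Main obstacle.} The crux is the deflation estimate $\sigma_{m-1}(\hat B)\ge\sigma_m(B)$: truncating a matrix controls singular values only in the \emph{wrong} direction, so one must identify the Schur complement with the inverse of a submatrix of $B^{-1}$, apply monotonicity there, and invert. The second delicate point is the passage from $\sigma_{\min}$ of the Schur complement back to $\sigma_{\min}$ of the whole $m\times m$ block; this is exactly where $b_1$ being the \emph{largest} entry is used (to keep the border vectors $c,d$ bounded by $|b_1|$), and it costs the factor $(1+\sqrt{m-1})^2\le 2m$. The base loss $1/N$ together with this per-step loss $2m$ is what yields $K=C_m/N$ with $C_m=1/(m!\,2^{m-1})$.
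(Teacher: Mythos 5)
Your proof is correct, but it takes a genuinely different route from the paper's. For the direct implication the paper stays within the Hermitian/positive-definite framework: it proves Proposition \ref{thm:aux} (a positive definite matrix with $m$ eigenvalues at least $a$ has a principal $m\times m$ submatrix bounded below by $\frac{a}{m!\,2^{m-1}N}$) by induction, using a greedy choice of the largest \emph{diagonal} entry, a rank-one perturbation argument for the Schur complement, and Lemma \ref{lem:st1}; it then applies this twice --- first to the spectral projection $P_\epsilon(A)$ to obtain $\alpha_m$, then, after converting $A^{-2}[\alpha_m]$ into a statement about $A^{-1}I_N[\alpha_m]A^{-1}$ via $\sigma(TT^*)\setminus\{0\}=\sigma(T^*T)\setminus\{0\}$ and a rank-$m$ projection, to obtain $\beta_m$ --- each application contributing one factor $C_m/N$ to the $K^2$ in \eqref{eq:Min}. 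You instead prove a single pivoting lemma for an arbitrary invertible matrix ($\sigma_m(B)\ge s$ yields an $m\times m$ submatrix with smallest singular value at least $C_m s/N$) by induction on $m$, with a greedy choice of the largest entry (not necessarily diagonal), deflation through the quotient identity $(B/b_1)^{-1}=B^{-1}[\{y_1\}^c,\{x_1\}^c]$ combined with monotonicity of singular values under passing to submatrices, and an explicit unit-triangular factorization to reattach the pivot; this produces both index sets in one pass, and your per-step loss $(1+\sqrt{m-1})^2\le 2m$ plays exactly the role of the $2l$ in Lemma \ref{lem:st1}, reproducing the same constant $C_m=1/(m!\,2^{m-1})$ and the same $K=C_m/N$. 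Your converse (singular-value monotonicity for the submatrix of $A^{-1}$) is the paper's converse in different words, since $A^{-2}[\alpha_m]\ge A^{-1}[\alpha_m,\beta_m]A^{-1}[\beta_m,\alpha_m]$ plus Cauchy interlacing encodes the same fact. What your route buys is a one-shot argument that never invokes spectral projections or positive definiteness, applies verbatim to non-Hermitian $B$, and makes the sources of the losses transparent ($1/N$ from $\|B\|\le N\|B\|_{\max}$ at the base, a factor $2m$ per deflation step); what the paper's route buys is the standalone Proposition \ref{thm:aux} on principal submatrices of positive definite matrices, which may be of independent interest.
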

The constant $C_m$  in \eqref{eq:K_2} is not sharp for $m>1$. However, the dependence on $N$ is optimal (and we will be interested in small $m$,  large $N$ behavior in the application below).

It is often convenient to work with  principal submatrices $A^{-1}[\gamma]$ of $A^{-1}$. One can tailor  Theorem \ref{thm:reduction} somewhat differently to accommodate this requirement, at the cost of increasing the cardinality of the corresponding index subsets $\alpha_m$, $\beta_m$. Namely, we have the following result:

\begin{cor}\label{cor}
Let $A=A^*$ be an $N\times N$ invertible matrix. Consider the following two assertions:
\begin{enumerate}[I.]
\item\label{i:1'}  $C_{\epsilon}(A)\ge m$;
\item\label{i:2'}
 There exists an  index subset $\gamma_m=\{i_1,\ldots,i_{2m}\}$ of $\{1,\ldots,N\}$  such that for any subset $\gamma\supset \gamma_m$ for which the matrix $A^{-1}[\gamma]$ is invertible
\begin{equation}\label{eq:Min'}
C_{\epsilon/K}\left(\left(A^{-1}[\gamma]\right)^{-1}\right)\ge m\mbox{\ \  for some }K>0.
\end{equation}
\end{enumerate}
Then  \eqref{i:1'}  implies \eqref{i:2'} with
\[K=
\frac{C_m}{N},\quad C_m=\frac{1}{m! \, 2^{m-1} }\]
%

Conversely,
 \eqref{i:2'} with $K=1$  implies \eqref{i:1'}.
\end{cor}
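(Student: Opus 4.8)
The plan is to deduce Corollary \ref{cor} directly from Theorem \ref{thm:reduction}, using \emph{both} of its directions together with the elementary observation that the entries of a principal submatrix $A^{-1}[\gamma]$ are precisely the corresponding entries of $A^{-1}$; in particular, for index subsets $\alpha,\beta\subseteq\gamma$ one has $\bigl(A^{-1}[\gamma]\bigr)[\alpha,\beta]=A^{-1}[\alpha,\beta]$. No new analytic input is needed: the value of $C_m$ and the optimal $N$-dependence are already built into Theorem \ref{thm:reduction}.

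For the implication \eqref{i:1'} $\Rightarrow$ \eqref{i:2'}, I would start from $C_\epsilon(A)\ge m$ and invoke the forward direction of Theorem \ref{thm:reduction} to produce index subsets $\alpha_m=\{i_1,\dots,i_m\}$ and $\beta_m=\{j_1,\dots,j_m\}$ satisfying \eqref{eq:Min} with $K=C_m/N$. I then set $\gamma_m$ to be any subset of $\{1,\dots,N\}$ of cardinality $2m$ containing $\alpha_m\cup\beta_m$ (padding with arbitrary extra indices if $\alpha_m\cup\beta_m$ has fewer than $2m$ elements). Now fix any $\gamma\supseteq\gamma_m$ for which $A^{-1}[\gamma]$ is invertible and put $M:=\bigl(A^{-1}[\gamma]\bigr)^{-1}$, so that $M$ is Hermitian and invertible and $M^{-1}=A^{-1}[\gamma]$ is indexed by $\gamma$. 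Since $\alpha_m,\beta_m\subseteq\gamma_m\subseteq\gamma$, the observation above gives $M^{-1}[\alpha_m,\beta_m]=A^{-1}[\alpha_m,\beta_m]$ and similarly with the roles of $\alpha_m$ and $\beta_m$ exchanged. Moreover $I_N[\alpha_m,\alpha_m]$ is simply the $m\times m$ identity, unchanged when viewed inside $M$. Hence \eqref{eq:Min} reads
\[
M^{-1}[\alpha_m,\beta_m]\,M^{-1}[\beta_m,\alpha_m] \;>\; \frac{K^2}{\epsilon^2}\,I_N[\alpha_m,\alpha_m] \;=\; \frac{1}{(\epsilon/K)^2}\,I_N[\alpha_m,\alpha_m],
\]
which is precisely assertion \eqref{i:2} of Theorem \ref{thm:reduction}, \emph{with constant $1$}, for the matrix $M$ at energy scale $\epsilon/K$. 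Applying the converse direction of the theorem to $M$ then gives $C_{\epsilon/K}(M)=C_{\epsilon/K}\bigl((A^{-1}[\gamma])^{-1}\bigr)\ge m$, which is \eqref{eq:Min'}; since $\gamma$ was arbitrary, \eqref{i:2'} holds with the stated $K$.

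For the converse, suppose \eqref{i:2'} holds with $K=1$, witnessed by some $\gamma_m$. I would simply take $\gamma=\{1,\dots,N\}\supseteq\gamma_m$, which is admissible because $A^{-1}[\gamma]=A^{-1}$ is invertible; then $\bigl(A^{-1}[\gamma]\bigr)^{-1}=A$, and \eqref{eq:Min'} becomes $C_\epsilon(A)\ge m$, i.e.\ \eqref{i:1'}. The whole argument is bookkeeping, and the only point demanding a bit of care is the index-set tracking: verifying that $\alpha_m,\beta_m\subseteq\gamma$ so that restriction to $\gamma$ does not alter \eqref{eq:Min}, and noting that enlarging $\alpha_m\cup\beta_m$ to a set of cardinality exactly $2m$ is harmless because every $\gamma\supseteq\gamma_m$ still contains $\alpha_m\cup\beta_m$. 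I anticipate no genuine obstacle.
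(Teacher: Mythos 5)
Your proof is correct and follows essentially the paper's own route: both directions of Theorem \ref{thm:reduction} are combined with the observation that $A^{-1}[\alpha_m,\beta_m]$ is a submatrix of $A^{-1}[\gamma]$, so that \eqref{eq:Min} becomes condition \eqref{i:2} with constant $1$ at scale $\epsilon/K$ for the matrix $\left(A^{-1}[\gamma]\right)^{-1}$, whose small eigenvalues are then counted by the converse direction. The only (harmless) deviations are that you apply this argument to every admissible $\gamma$ directly, whereas the paper first reduces to a minimal $\gamma$ via the Cauchy-interlacing inequality \eqref{eq:Cauint}, and that in the converse implication you simply take $\gamma=\{1,\ldots,N\}$ instead of invoking that interlacing bound.
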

\begin{rmk*}
The matrix $\left(A^{-1}[\gamma]\right)^{-1}$ coincides with the Schur complement of $A[\gamma^c]$ in $A$, see \eqref{eq:Schr} below for details. Here $\gamma^c = \{1, 2,\ldots, N\}\setminus\gamma$.
\end{rmk*}

\subsection{Application to random Schr\"odinger operators}
In  quantum physics, the tight-binding approximation is often used as the prototypical model for the study of electron propagation in solids.  In this model, the evolution of the wave function $\psi$ on the $d$-dimensional lattice $\Z^d$ is given by the Schr\"odinger equation
\be\label{eq:tight} i \hbar \dot{\psi}_t = H \psi_t;\quad \psi(0)=\psi_0,\ee
where the self-adjoint Hamiltonian $H$ is a sum of the hopping term $H_0$ and the potential $V$, of the form
\[
(H\psi)(x)=(H_0\psi)(x)+ V(x)\psi(x),\quad x\in\Z^d.
\]
In this work we consider the random operators that have this functional form. Let us list few of these:
\vspace{3mm}
\paragraph{\it Anderson model $H_A$}
One of the best-studied models for disordered solids was introduced by P. H. Anderson in \cite{An}.  In this model the Hilbert space is $\ell^2(\Z^d)$, the hopping term $H_0$ is the discrete Laplacian $\Delta$, and  the potential $V$ in $H$ above is of the form $V(x)=g \sum a_{x-y}v(y)$, where the single site potentials $v(y)$ are independent random variables. The real parameter $g$ is a coupling constant which describes the strength of the disorder.
\vspace{3mm}
\paragraph{\it Alloy-type Anderson model $H_{alloy}$}
Here the Hilbert space is  also $\ell^2(\Z^d)$, the hopping term $H_0$ is a short range ergodic operator. The value of the potential
$V(x)$ at a site $x\in \mathbb{Z}^d$ is generated from independent random variables $\{u(y)\}$ via the transformation
\[ V(x) = g\sum_{y\in\mathbf{\Gamma}} a_{x-y} u(y)~,\]
where the index $y$ takes values in some sub-lattice $\mathbf{\Gamma}$ of $\mathbb{Z}^d$.  The Hamiltonian $H_A$ coincides with $H_{alloy}$ provided $H_0=\Delta$, $\mathbf{\Gamma}=\Z^d$; $a_z=\delta_{|z|}$, where $\delta_x$ is Kronecker delta function: $\delta_0=1;\,\delta_x=0$ for $x\neq0$. In general, the random potential at sites $x,y$ is correlated for this model. As its name suggest, $H_{alloy}$ is used to describe (random) alloys in the tight binding approximation.
\vspace{3mm}
\paragraph{\it Random block operator $H_{block}$}
The Hilbert space is  $\ell^2(\Z^d;\C^k)\cong \left(\ell^2(\Z^d)\right)^k$ (the space of square-summable functions $\psi\hspace{-1mm}: \Z^d\rightarrow \C^k$).  The kernel $H_0(x,y)$ of the hopping term  is a deterministic, translation invariant $k\times k$ matrix. The random potential $V(x)$ at each site is an independently drawn random $k\times k$ Hermitian matrix multiplied by $g$.

\subsubsection{Previous Results}


Anderson  \cite{An} argued  that in the $g\gg1$ regime, the solution of the initial value problem \eqref{eq:tight} for $H_A$ stays localized in space for all times almost surely if the initial wave packet $\psi_0$ is localized. Mathematical study of Anderson localization is an active field; we refer readers to the recent reviews \cite{Kirsch,S} on the subject for the detailed bibliography. In this work, we focus our attention on a single aspect of Anderson localization---the so-called $m$-level Wegner estimate.


Let  $|S|$ denote  the cardinality of the set $S$.
Let $H_A^\Lambda$ be a  restriction of the operator $H_A$ to a finite box $\Lambda$. Then the $m$-level Wegner estimate is an upper bound on the probability of $n$ eigenvalues being in the same energy interval $I_\epsilon:=(E-\epsilon,E+\epsilon)$:
 \[
 \pr\left({\mathcal C}_\epsilon(H_A^\Lambda-E)\ge m\right)\le C_m(|\Lambda|\epsilon)^m,
 \]
for random variables $v(x)$ with a bounded density. As such, it  gives some measure of the correlation between multiple eigenvalues. We will refer to the $1$-level bound  simply as the  Wegner estimate (first established by F. J. Wegner in \cite{W}).  It plays the instrumental role in the proof of Anderson localization.

If localization occurs in some energy interval $I\subset\R$, the entire spectrum of $H_A$ in $I$ is pure point. It is then natural to study the distribution of the eigenvalues for $H_A^\Lambda$ in this interval.  Physicists expect that there is no energy level repulsion for states in the localized regime: that is, the eigenvalues should be distributed independently on the interval $I$. The first rigorous result in this direction, namely that the point process associated with the (rescaled) eigenvalues converges to a Poisson process, was obtained by Molchanov \cite{Mo} in the setting of a one-dimensional continuum.

Minami \cite{Min} established the analogous result for $H_A$ under the assumption that the distribution
$μ$ of every $v(n)$ has a bounded density. The key component in \cite{Min} is the $2$-level Wegner estimate, which is consequently known as the Minami estimate.

By now the localization phenomenon for the original Anedrson model $H_A$ is well understood. In particular, the general $m$-level Wegner estimate is known to hold for essentially all distributions $\mu$ of the random potential $v(x)$; see \cite{BHS, GV}. We refer the reader to \cite{CGK} for the state of the art results concerning eigenvalue counting inequalities for $H_A$. However, the current understanding of many (in fact almost all) other random models of interest remains partial at best.

The Wegner estimate for a special class of alloy-type Anderson model $H_{alloy}$ was first established by Kirsch in \cite{K1}.  By now it is known to hold in fair generality (albeit not universally).  See the recent preprint \cite{LPTV} for the extensive bibliography on the subject. The Wegner estimate for  a random block operator $H_{block}$---with $V(x)=gv(x) A$ where $v(x)$ are independent random variables and $A$ is a fixed invertible Hermitian matrix---holds in perturbative regimes.  That is, it holds near the edges of the spectrum, \cite{CE} and in the strong disorder regime $1\ll g$, \cite{ESS}.  A weaker bound (weaker in terms of the volume dependence) near the edges of the spectrum was established for Fr\"ohlich model, where the matrix-valued potential is given by $V(x) = gU(x)^* A U(x)$,  where $A$ is a fixed self-adjoint $k \times k$ matrix, and the $U(x)$ are independently chosen according to the Haar measure on $SU(k)$, \cite{B}.

On the other hand, not much progress has been made on extensions of the multi-level Wegner estimate, besides allowing for more general background operators $H_0$ than the discrete Laplacian. In particular, apart from two special examples below, all previous works require a non-correlated random potential. In \cite{K}, this limitation was partially removed in the continuum one-dimensional setting, allowing for positively correlated randomness. In \cite{CGK'}, the authors announced the establishment of the Minami estimate and subsequently Poisson statistics for a general class of positively correlated random potentials. Unfortunately, although \cite{CGK'} contains a new elegant and efficient proof of Minami's estimate for $H_A$, its extension to the generalized setting has a significant gap, which so far has not been removed. Finally, let us mention the recent result \cite{TV}, which established the Minami estimate for a special class of weakly correlated randomness for which one can transform the problem to the uncorrelated one.

The reader may wonder about the glaring disparity between the wealth of results on the $1$-level Wegner estimate and the scarcity of results for its many-level counterparts. The reason can be traced to the direct (and frequently exploited) link between the former and the underlying Green function given by \eqref{eq:basic1}. The amenable nature of the Green function then allows one to establish a robust $1$-level Wegner estimate in many situations of interest. In the present work, we harness  the connection between the many-level Wegner estimate and the Green function given by Theorem \ref{thm:reduction} to establish an $m$-level Wegner estimate for a certain class of models with correlated randomness. Roughly speaking, our method works if the randomness in the system is sufficiently rich.  (We will quantify this statement in the sequel.)

Although in most known applications (such as localization, simplicity of the spectrum, and Poisson statistics of eigenvalues) one is interested in the $1$- and $2$-level Wegner estimates, it is nonetheless natural from a mathematical perspective to investigate the general many-level case. From a  practical perspective, it can yield some insight on the nonlinear Anderson model via multi-state resonance phenomenon, \cite{FKS}.

The blessing and the curse of the existing methods employed in proof of the Minami estimate (with the single exception of \cite{K}) is that the nature of the background operator $H_0$ plays little if any role in the proofs. It is however clear that in the case of the correlated random potential in $H_A$ one cannot hope to get the Minami estimate without exploiting the structure of $H_0$. Indeed, consider the one dimensional operator $H_{alloy}$ with  $H_0=0$, and the random potential at odd sites being i.i.d. random variables, while $v(2n)=v(2n-1)$. Its spectrum consists of (the closure of) the set of  eigenvalues $\{\lambda_n\}=\{v(n)\}$, each one being degenerate. Consequently, even though $H_{alloy}$ in this setting is perfectly localized, the probability of finding two closely lying eigenvalues is equal to $1$.

\subsubsection{The $m$-level Wegner estimate for the random block model $H_{block}$}
We will consider the class of random block models $H_{block}$ introduced earlier.

Let ${\mathcal G}=({\mathcal V},{\mathcal E})$ be a graph with degree at most $\kappa$, such that the set of vertices (sites) ${\mathcal V}$  is  finite with cardinality $N$. The main example of this model is the restriction of the lattice $\Z^d$ to the box, but the greater generality does not require additional effort here.

Let $(\Omega, \pr)$ be a probability space. Let $\{V(x)=gA_\omega(x):\ x\in {\mathcal V},\, \omega\in\Omega\}$ be a collection of independent, identically distributed random $k\times k$ Hermitian matrices.
\paragraph{\it Basic Assumption}
We now state the main technical condition that we will use as an input for our application theorem below.

\vspace{3mm}
\paragraph{{\bf (A)}}
{\it For an integer $n$, let $S$ be a given set of $2nk$ district integers. Then the matrix $A_\omega(x) -a$ is invertible for all $a\in S$ and all $\omega\in\Omega$.  Moreover, there exists an $\alpha>0$ such that, for any integer $a\in S$, any $\epsilon\in [0,1]$ and arbitrary Hermitian $k\times k$ matrix $J$ the bound
 \be\label{eq:assump}
\pr\left(\abs{\det\left((A_\omega(x) -a)^{-1}+(J+a)^{-1}\right)}\le \epsilon\right)\le K \epsilon^\alpha
\ee
holds.}
\vspace{3mm}

It guarantees that the randomness in the system is rich enough to imply the result below (Theorem \ref{thm:main}).  At the first glance, a more natural condition should concern the properties of the matrix $ A_\omega(x)+J$ as it is the correct functional form of the corresponding Schur complement of $H_\omega$ (see Section \ref{sec:main} for details).  However, this turns out to be an unsuitable choice because of the absence of  an $\it{a \, priori}$ bound on the norm of the background operator $J$ (which encodes the information about the environment of the $x$-block in $H_\omega$). On the other hand, for a sufficiently large set of numbers $\{a_i\}$ one can ensure that regardless of the norm of  $J$, one of  the matrices $\{(J+a_i)^{-1}\} $ is bounded in norm by $1$ (see Proposition \ref{lem:prel5}). We then exploit the fact that  matrices $(A-a)^{-1}+(J+a)^{-1} $ (which appears in \eqref{eq:assump}) and  $A+J$ are  related:
\begin{multline}\label{eq:Wood'}\left(A+J \right)^{-1} =  (A-a)^{-1} \\  - (A-a)^{-1}\left((A-a)^{-1}+(J+a)^{-1} \right)^{-1} (A-a)^{-1},\end{multline}
provided that $A-a$ and $J+a$ are invertible. One can readily verify \eqref{eq:Wood'} (which is in fact a particular case of Woodbury's matrix identity) by multiplying both sides by $A+J$.

\vspace{3mm}

We now  introduce our single-particle Hamiltonian. Namely,
let $H_\omega(g)$ be a random block operator $H_{block}$ acting on $\ell^2({\mathcal V};\C^k)$ (the space of square-summable functions $\psi:\ {\mathcal V}\rightarrow \C^k$) as
\be\label{eq:H}
(H_\omega(g)\psi)(x)=(H_0\psi)(x)+gA_\omega(x)\psi(x),
\ee
where $g>0$ is a coupling constant, $H_0$ is an arbitrary deterministic self-adjoint operator on  $\ell^2({\mathcal V};\C^k)$, and $A_\omega(x)$ is an independently drawn random $k\times k$ Hermitian matrix as above. We use the notation $H_\omega(g)$ instead of $H_{block}$ to stress the random nature of this operator as well as the dependence on the parameter $g$.

\begin{thm}\label{thm:main}
Assume {\bf (A)}. Then
\begin{enumerate}[I.]
\item
 For any $E\in\R$ the operator $H_\omega(g)-E$ is almost surely invertible.
\item
Moreover, there  exist $\epsilon_0>0$ and $C>0 \,$ (which depend only on $k,m,\alpha$) for which we have
 \be\label{eq:mainbnd}
 \pr\left({\mathcal C}_\epsilon(H_\omega(g)-E)\ge m\right)\le C \left|\ln(N\epsilon/g)(N\epsilon/g)^\alpha\right|^{m}
 \ee
 for any $E\in\R$, for any  $\epsilon\in[0,\epsilon_0]$ and for all $m\le n$. In the $m=1$ case we can improve the above bound to
  \be\label{eq:mainbnd1}
 \pr\left({\mathcal C}_\epsilon(H_g-E)\ge 1\right)\le C (N\epsilon/g)^\alpha.
 \ee
 \end{enumerate}
 \end{thm}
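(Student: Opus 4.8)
The plan is to transfer the problem from $H_\omega(g)$ to a family of $k\times k$ matrix problems via Corollary \ref{cor}, then apply the Basic Assumption {\bf (A)} site-by-site. For part I, the operator $H_\omega(g)-E$ fails to be invertible only if some block structure degenerates; since each $A_\omega(x)-a$ is invertible for $a\in S$ by {\bf (A)}, a suitable choice of $a$ depending on the environment (using Proposition \ref{lem:prel5} to ensure $(J+a)^{-1}$ is bounded) together with the Woodbury identity \eqref{eq:Wood'} shows the Schur complement is invertible almost surely; one then integrates out the (absolutely continuous) randomness to get probability one for each fixed $E$.

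For part II, first I would invoke Corollary \ref{cor} with the roles played by $A = H_\omega(g)-E$: if ${\mathcal C}_\epsilon(H_\omega(g)-E)\ge m$, then for a fixed index set $\gamma_m$ of cardinality $2m$ blocks (so $2mk$ scalar coordinates), and any larger $\gamma$ making $A^{-1}[\gamma]$ invertible, the Schur complement $(A^{-1}[\gamma])^{-1}$ has at least $m$ eigenvalues in $(-K'\epsilon, K'\epsilon)$ with $K' = N/C_m$. I would then choose $\gamma$ to be a single block $x$ together with $\gamma_m$ — or organize the $2mk$ coordinates into the structure demanded by {\bf (A)} — so that the relevant Schur complement takes the form $A_\omega(x) + J$ where $J$ encodes $H_0$ and the other blocks. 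Using \eqref{eq:Wood'}, having a small eigenvalue of $A_\omega(x)+J$ (equivalently, a large entry/norm of its inverse) forces $|\det((A_\omega(x)-a)^{-1} + (J+a)^{-1})|$ to be small for the good choice of $a$ furnished by Proposition \ref{lem:prel5}. Conditioning on all randomness except at the $2m$ (or $n$) blocks involved and applying \eqref{eq:assump} independently at each of them yields a product of $m$ (or $2m$, whichever combinatorics the Schur-complement bookkeeping produces) factors each of size $\lesssim (N\epsilon/g)^\alpha$; a union bound over the finitely many choices of $\gamma_m$ costs only a constant depending on $k,m$, and the logarithmic factor $|\ln(N\epsilon/g)|$ arises from covering the continuum of possible background values $a$ (or from the number of blocks in $S$ needed to guarantee one good $a$, which grows like $|\ln \epsilon|$). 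The $m=1$ improvement \eqref{eq:mainbnd1} drops the log because a single small eigenvalue needs no Minami-type exclusion argument — one applies \eqref{eq:assump} just once.

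The main obstacle I anticipate is the independence bookkeeping in the conditional step: the Schur complement $(A^{-1}[\gamma])^{-1}$ depends on $A_\omega(x)$ for all $x\in\gamma$ simultaneously, so one cannot naively factor the probability. The resolution — which Proposition \ref{lem:prel5} is evidently designed to enable — is to fix the environment $J$ first (it depends only on randomness outside the current block and on $H_0$), note that conditionally on $J$ the matrix $A_\omega(x)$ is still independent with the same law, and then apply \eqref{eq:assump} with that frozen $J$. Iterating this across the $m$ blocks that must each contribute a small eigenvalue, and handling the a priori unbounded norm of $J$ via the discrete set $S$ of shifts, is where the real work lies; the rest (Corollary \ref{cor}, Woodbury, union bound) is comparatively routine.
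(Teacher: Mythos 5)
Your overall strategy (size reduction via Corollary \ref{cor}, choice of a good integer shift $a$ via Proposition \ref{lem:prel5}, single-block reduction to Assumption {\bf (A)} through the identity \eqref{eq:Wood'}) matches the paper's skeleton, and your Part I is essentially the paper's induction on the number of sites. But the central step of Part II is missing. After Corollary \ref{cor} the object you control is a reduced operator on up to $2m$ blocks, of the form $(T_0\psi)(x)+A_\omega(x)\psi(x)$, and the hypothesis you obtain is that this \emph{coupled} $2mk\times 2mk$ matrix has $m$ eigenvalues in $(-K'\epsilon,K'\epsilon)$. That event is not of the form ``each of $m$ blocks contributes a small eigenvalue'' and does not factor into $m$ single-block events, so the step ``conditioning \dots and applying \eqref{eq:assump} independently at each of them yields a product of $m$ factors'' has no mechanism behind it; this is exactly the bookkeeping you yourself flag as ``where the real work lies'', and it is left unresolved. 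The paper closes this gap with two moves you do not have: (i) a norm reduction, in which Proposition \ref{lem:prel5} is applied to the \emph{whole} reduced operator (not just one block) to replace it by $\hat H_\omega$ with diagonal blocks $(A_\omega(x)-a)^{-1}$, $\|\hat H_\omega\|\le 1$, and comparable eigenvalue counts \eqref{eq:red}; then $m$ small eigenvalues together with the norm bound give the scalar inequality $|\det\hat H_\omega|\le(7k^2m^2\epsilon)^m$; and (ii) Lemma \ref{lem:main}, which bounds $\pr(|\det\hat H_\omega|\le\delta)$ by peeling off one block at a time via the Schur determinant formula $\det\hat H_\omega=\det\hat H^{(x)}_\omega\,\det(\hat H_\omega/\hat H^{(x)}_\omega)$, conditioning on the environment, and a layer-cake integration. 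It is that integration, through the factor $\E\bigl[\chi(1>F_\omega>\epsilon)F_\omega^{-\alpha}\bigr]\le K\alpha\ln(\epsilon^{-1})$, that produces the logarithms in \eqref{eq:mainbnd}; your attribution of the logarithm to ``covering the continuum of possible shifts $a$'' is incorrect, since only finitely many shifts (at most the matrix dimension, independent of $\epsilon$) are ever needed.

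Two smaller points. The $m=1$ improvement is not simply ``apply \eqref{eq:assump} once'': after size reduction you still face a two-block operator, and a large norm of its inverse is not a priori localized in a single diagonal block. The paper first removes the event of two eigenvalues within $\epsilon^{2/3}$ (using the already established $m=2$ bound), then exploits positivity of $(H_\omega-E+\epsilon)^{-1}+2\epsilon^{-2/3}$ together with Lemma \ref{lem:projbnd} to reduce the norm to the diagonal blocks $(A_\omega(x)+J)^{-1}$ with $J$ independent of $A_\omega(x)$, and only then applies {\bf (A)}. Finally, in Part I you invoke ``integrating out the (absolutely continuous) randomness''; absolute continuity is not assumed, and almost-sure invertibility of each Schur complement $A_\omega(x)+J$ follows instead from \eqref{eq:assump} (letting $\epsilon\to0$) combined with \eqref{eq:Wood'}, inside the induction on the number of sites.
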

 \subsubsection{Examples}

\begin{enumerate}
\item[]
\end{enumerate}
\vspace{-5mm}
$\bullet$ Anderson model $H_A$. 
As we mentioned earlier, the nontrivial Minami estimate is well understood only for the original Anderson model among all alloy-type models. It is therefore a litmus test to verify Assumption {\bf (A)} for $H_A$.
\begin{thm}\label{thm:weakMint}
Suppose that the distribution $\mu$ of the $v(x)$ variables in $H_A$ is compactly supported  on the interval $I=[-b,b]$ for some $b>0$ and is $\beta$-regular, i.e. for any Lebesgue - measurable $S\subset I$ we have
\[ \mu(S)\le C|S|^\beta.\]
Then Assumption {\bf (A)} holds with $\alpha=\beta$.
\end{thm}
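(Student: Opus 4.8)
The plan is to use the fact that the Anderson model sits inside the block formalism as the scalar case $k=1$: here each $A_\omega(x)$ is the single random variable $v(x)$ with law $\mu$, and every ``Hermitian $k\times k$ matrix $J$'' appearing in {\bf (A)} is just a real number, so the determinant in \eqref{eq:assump} reduces to an absolute value. Writing $t:=(J+a)^{-1}\in\R$, what must be shown is the anti-concentration estimate
\[
\pr\!\left(\,\abs{(v(x)-a)^{-1}+t}\le\epsilon\,\right)\ \le\ K\,\epsilon^{\beta},\qquad \epsilon\in[0,1],
\]
uniformly over $t\in\R$ and over the (finitely many) integers $a\in S$. The invertibility part of {\bf (A)} is immediate: one takes the integers of $S$ to lie outside $[-b,b]\supseteq\mathrm{supp}\,\mu$, so that $v(x)-a$ is bounded away from $0$ and $A_\omega(x)-a$ is invertible for every $\omega$ (this is also the choice one wants when feeding {\bf (A)} into Theorem~\ref{thm:main}, cf.\ Proposition~\ref{lem:prel5}).

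The main step is to transport the $\beta$-regularity of $\mu$ through the map $\phi_a(u):=(u-a)^{-1}$. Since $a\notin[-b,b]$, the quantity $u-a$ keeps a fixed sign and satisfies $\dist(a,[-b,b])\le\abs{u-a}\le b+\abs a$ on $[-b,b]$; hence $\phi_a$ is a smooth bijection of $[-b,b]$ onto an interval every point of which has modulus at least $(b+\abs a)^{-1}$, and its inverse $w\mapsto a+w^{-1}$, having derivative $-w^{-2}$, is Lipschitz there with constant $(b+\abs a)^2$. Consequently the $\phi_a$-preimage within $[-b,b]$ of an interval of length $2\epsilon$ has Lebesgue measure at most $2(b+\abs a)^2\epsilon$, so the law $\nu$ of $w:=(v(x)-a)^{-1}$ obeys $\nu(B)\le C\,(b+\abs a)^{2\beta}\,\abs{B}^{\beta}$ for short intervals $B$. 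Applying this with $B=[-t-\epsilon,\,-t+\epsilon]$ --- which is exactly the event in the first display --- gives that estimate with $\alpha=\beta$ and, say, $K=2^{\beta}C\,(b+\max_{a\in S}\abs a)^{2\beta}$.

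I do not expect a real obstacle: the proof is nothing more than the scalar reduction above together with the stability of $\beta$-regularity under bi-Lipschitz changes of variable. The only points needing care are clerical. One is the direction of the Lipschitz bound: it is the \emph{inverse} of $\phi_a$ that must be Lipschitz, so that preimages of short intervals stay short, and this is what makes the upper bound $\abs{u-a}\le b+\abs a$ (rather than $\dist(a,[-b,b])\le\abs{u-a}$) the relevant inequality. The other is simply to note that the constant $K$ is allowed to depend on $b$, on $\beta$, on the regularity constant $C$ of $\mu$, and on $\max_{a\in S}\abs a$ --- all fixed, harmless quantities once $S$ has been chosen.
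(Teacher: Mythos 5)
Your proof is correct and takes essentially the same route as the paper's: reduce to the scalar case $k=1$, bound the Lebesgue measure of the set of $v(x)$ with $\left|(v(x)-a)^{-1}+(J+a)^{-1}\right|\le\epsilon$ by a constant multiple of $\epsilon$, and invoke the $\beta$-regularity of $\mu$. The only inessential difference is how that measure bound is obtained: the paper solves for the endpoints of the bad interval explicitly and gets a constant ($4$) uniform in $a$ under the restrictions $a-b\ge 2$ and $\epsilon\le 1/(2a)$, while you use the Lipschitz constant $(b+|a|)^2$ of the inverse of $u\mapsto (u-a)^{-1}$ on the image of $[-b,b]$, which gives an $a$-dependent constant that is harmless because $S$ is a fixed finite set of integers.
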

Our approach to the Minami estimate is also meaningful for the $\mathbf{\Gamma}$-trimmed Anderson model introduced in \cite{EK}, near the edges of the spectrum, in the sense that  the assumption {\bf (A)} can be  verified for it.

$\bullet$ Fr\"ohlich model and alloy type Anderson model $H_{alloy}$.  Assumption {\bf (A)} is either not satisfied for a single site $x$ or is satisfied with a power $\alpha$ which is too small to make the result meaningful. However, the close inspection of the proof of Theorem \ref{thm:main} shows that the matrix $J$ that appears in Assumption {\bf (A)} is not required to be  completely arbitrary. In fact, the relevant matrices $J$ carry the structure of the Schr\"odinger operator (with arbitrary boundary conditions). It seems plausible  (and is on our to-do list) that Assumption {\bf (A)} can be verified for such $J$ and sets of sites that include $x$ and its neighbors. 

%
%
%
%

$\bullet$ The third model arises from the study of dirty superconductors via the Bogoliubov - de Gennes equation. After a suitable change of the coordinate basis, the Bogoliubov-de Gennes (BdG) model above can be described in terms of the operator defined in \eqref{eq:H}, with  $A_\omega(x)=\sigma_\omega(x)\in M_{2\times2}$, where $\sigma_\omega(x)$ is a random Pauli matrix of the form
\be\label{eq:sgm1}\sigma_\omega(x)=\left[\begin{array}{cc}
u_x & v_x \\
 v_x &-u_x\\
\end{array}\right];\quad u_x,\, v_x \mbox{ are  random variables.}\ee
The $m$-Wegner estimate for these models has only been established for $m=1$ case, for a restricted class of joint distributions of $u,\,v$  variables (absolutely continuous and with support in a half plane), and for a  specific background operator $H_0$ in \cite{KMM,GM}. 
We  establish the robust Wegner and Minami estimates for this model.
\begin{thm}\label{thm:weakMin}
Let each $A_\omega(x)$ be given by \eqref{eq:sgm1}. Suppose that the joint distribution $\mu$ of the $u,v$ variables is supported on a unit disc $O$ and is $\beta$-regular, i.e. for any Lebesgue - measurable $S\subset O$ we have
\[ \mu(S)\le C|S|^\beta.\]
Then Assumption {\bf (A)} holds with $\alpha=\beta$.
\end{thm}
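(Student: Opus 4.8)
The plan is to verify both clauses of Assumption {\bf (A)} directly, with $k=2$ and $A_\omega(x)=\sigma_\omega(x)$. For the invertibility clause, note that $\sigma_\omega(x)^2=(u_x^2+v_x^2)I$, so the eigenvalues of $\sigma_\omega(x)$ are $\pm\sqrt{u_x^2+v_x^2}\in[-1,1]$ on $O=\operatorname{supp}\mu$; taking $S$ to consist of integers $a$ with $\abs a\ge 2$ then makes $\sigma_\omega(x)-a$ invertible for every $\omega$, with $\det(\sigma_\omega(x)-a)=a^2-(u_x^2+v_x^2)\in[a^2-1,a^2]$. For the quantitative clause the idea is to rewrite the event in \eqref{eq:assump} as a sublevel set of an explicit polynomial in the scalar variables $(u,v):=(u_x,v_x)$ and then appeal to $\beta$-regularity. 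Using $\sigma^2=\rho^2 I$ with $\rho^2:=u^2+v^2$ gives $(\sigma-a)^{-1}=-(a^2-\rho^2)^{-1}(\sigma+aI)$; writing $B:=(J+a)^{-1}=b_0 I+b_1\sigma_x+b_2\sigma_y+b_3\sigma_z$ and expanding via $\det(X+Y)=\det X\,\det(I+X^{-1}Y)$ together with $\det(I+C)=1+\tr C+\det C$ for $2\times2$ matrices one obtains
\[
\det\!\bigl((\sigma-a)^{-1}+B\bigr)=\frac{N(u,v)}{a^2-\rho^2},\qquad N(u,v):=\bigl(1-2ab_0+a^2\det B\bigr)-(\det B)\rho^2+2(ub_3+vb_1).
\]
Since $a^2-\rho^2$ is bounded above and below by positive constants on $O$, it is enough to bound the Lebesgue measure of $\{(u,v)\in O:\abs{N(u,v)}\le\epsilon'\}$ by $C\epsilon'$ with $C$ independent of $J$; then $\beta$-regularity promotes this to $\mu\le C'(\epsilon')^\beta$, and as $\epsilon'$ is comparable to $\epsilon$ we get {\bf (A)} with $\alpha=\beta$.

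The planar estimate is favourable because the top-order part of $N$ is the rotation-invariant $-(\det B)\rho^2$, so every level set $\{N=t\}$ is a circle (possibly degenerate to a point or empty), and any circular arc contained in the unit disc has length at most $2\pi$. I would split into three regimes. If $\abs{\det B}$ is bounded below, completing the square exhibits $\{\abs N\le\epsilon'\}$ as an annulus of area $\le 2\pi\epsilon'/\abs{\det B}$. If $\det B$ is small but the linear part has $b_1^2+b_3^2$ bounded below, then $\nabla N$ is bounded away from $0$ on $O$, and the coarea formula together with the arc-length bound gives measure $\lesssim\epsilon'$. The remaining regime, $\det B$ and $b_1,b_3$ all small, is the crux: there $N$ is nearly the constant $1-2ab_0$, so one has to rule out $b_0\approx\tfrac1{2a}$. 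This is where the structure of the model enters. The matrices $J$ that the reduction of Section~\ref{sec:main} produces for this model are not arbitrary Hermitian matrices: the potentials $\sigma_\omega(x)$ in \eqref{eq:sgm1} are real symmetric, and the Schur complements that arise inherit this symmetry (equivalently, in the particle--hole formulation, $\tr J=0$, which via $\tr(J+a)=2a$ forces $b_0=a\det B$). For such $J$ one has $\det B=b_0^2-b_1^2-b_3^2$, so $\det B$ small together with $b_1,b_3$ small forces $b_0$ to be small as well; hence $\abs{1-2ab_0}$ is bounded below and $\abs N$ stays bounded away from $0$ on $O$, making the sublevel set empty once $\epsilon'$ is small enough (depending only on the finitely many $a\in S$), while for larger $\epsilon'$ the bound $\abs O\le C\epsilon'$ is trivial. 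Combining the three regimes yields the desired uniform measure bound, and the theorem follows.

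I expect the degenerate regime to be the main obstacle. For a genuinely arbitrary Hermitian $J$ one can make $N$ uniformly small on $O$ (take $(J+a)^{-1}$ close to $\tfrac1{2a}(I+\sigma_y)$), so the estimate—and with it {\bf (A)}—would fail; it is therefore essential to pin down the constraint on $J$ coming from the reduction and to check that it is incompatible with $N$ being uniformly small. Everything else—the closed-form determinant and the elementary facts about sublevel sets of a function with radial leading part—should be routine bookkeeping.
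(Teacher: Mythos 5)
You take a genuinely different route from the paper. The paper never expands the determinant in \eqref{eq:assump}: it proves the two planar annulus estimates \eqref{eq:fbnd4}--\eqref{eq:fbnd5} (smallness of $\det(\sigma_\omega(x)+J)$, resp.\ existence of an eigenvalue of $\sigma_\omega(x)+J$ in $(-\epsilon,\epsilon)$) and then invokes Lemma \ref{lem:Either}, which says that smallness of $\det\bigl((\sigma_\omega(x)-a)^{-1}+(J+a)^{-1}\bigr)$ forces one of those two events. You instead compute that determinant in closed form as $N(u,v)/(a^2-\rho^2)$ and estimate sublevel sets of $N$ directly; your algebra is correct (I verified $N=(1-2ab_0+a^2\det B)-(\det B)\rho^2+2(ub_3+vb_1)$ with $B=(J+a)^{-1}$), and your first two regimes are indeed routine.

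The gap is exactly where you locate it, and it cannot be closed under the hypotheses of the theorem: Assumption \textbf{(A)} quantifies over \emph{arbitrary} Hermitian $J$, while your degenerate regime is handled only by importing extra structure ($\tr J=0$, or real symmetry). The tracelessness claim is moreover not available for the $J$'s actually produced by the reduction of Section \ref{sec:main}: those backgrounds are shifted by $-E$ and by the integer $a$ and enter \textbf{(A)} through $(J+a)^{-1}=T$ with $T$ a full Schur complement, so $\tr J=0$ (equivalently $\tr T^{-1}=2a$) has no reason to hold, and it is not equivalent to real symmetry. What your analysis does get right is the obstruction itself, and as far as I can check it genuinely contradicts the theorem as stated: taking $(J+a)^{-1}=\tfrac{1}{2a}\bigl(I+(1-\delta)\sigma_y\bigr)$ (a legitimate Hermitian $J$, of small norm even) gives, by your formula, $\det\bigl((\sigma_\omega(x)-a)^{-1}+(J+a)^{-1}\bigr)\equiv \delta(2-\delta)/(4a^2)$ for every $(u_x,v_x)$, so the probability in \eqref{eq:assump} equals $1$ at $\epsilon=\delta(2-\delta)/(4a^2)$, while \textbf{(A)} would force it to be $O(\epsilon^\alpha)$. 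This also pinpoints the restriction hidden in the paper's own argument: the explicit formula used there for $\|(\sigma_\omega(x)+J)^{-1}\|$ in the proof of \eqref{eq:fbnd5} drops the $\sigma_y$ (imaginary off-diagonal) component of $J$ and is valid only for real symmetric $J$; for the $J$ above, \eqref{eq:fbnd5} itself fails. So as a proof of the stated theorem your proposal is incomplete, but the defensible repair is to restrict \textbf{(A)} (and the theorem) to real symmetric $J$ --- which the BdG reduction preserves when $H_0$ is real, and under which both the paper's argument and your case analysis go through --- rather than to invoke tracelessness.
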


 \subsection{Paper's organization}
We prove our main abstract result, Theorem \ref{thm:reduction}, along with its corollary, in Section \ref{sec:reduce}.  We prove our  result on eigenvalue estimates, Theorem \ref{thm:main}, in Section \ref{sec:main}.  We consider the implication of the latter result for the random block operators in Section \ref{sec:appl}.  These proofs depend on a number of auxiliary results, which we prove in Section \ref{sec:proofs}.


\section{Proof of Theorem \ref{thm:reduction} and Corollary \ref{cor}}\label{sec:reduce}
\subsection{Notation}
Let $n$ be a positive integer, and let $\alpha$ and $\beta$ be index sets, i.e., subsets of $\{1, 2,\ldots, n\}$. We denote the cardinality of an index set by $|\alpha|$ and its complement by $\alpha^c = \{1, 2,\ldots, n\}\setminus\alpha$. For an  $n\times n$ matrix $A$,  let $A [\alpha,\beta]$ denote the submatrix of $A$ with rows indexed by $\alpha$ and columns indexed by $\beta$, both of which are thought of as increasing, ordered sequences, so that the rows and columns of the submatrix appear in their natural order.
We will write $A[\alpha]$ for $A[\alpha,\alpha]$. If $|\alpha|=|\beta|$ and if $A[\alpha,\beta]$ is nonsingular, we denote by $A/A[\alpha,\beta]$ the Schur complement of $A[\alpha,\beta]$ in $A$, \cite{Z}:
\be\label{eq:Schr}
A/A[\alpha,\beta]=A[\alpha^c,\beta^c]-A[\alpha^c,\beta]\left(A[\alpha,\beta]\right)^{-1}A[\alpha,\beta^c].
\ee
We will frequently use Schur's complementation and its consequences in this work; we refer the reader to the comprehensive book \cite{Z} on this topic.

For a Hermitian matrix $A$ and a positive number $a$ we will write
\[{\mathcal B}_a(A):=|\sigma(A)\cap[a,\infty)|.\]

Let $P_\epsilon(A)$ denote the spectral projection of the Hermitian matrix $A$ onto the interval $(-\epsilon,\epsilon)$ for $\epsilon>0$.
\subsection{Proof of Theorem \ref{thm:reduction}}

Suppose that (\ref{i:1}) holds. We will use the following assertion:
\begin{prop}\label{thm:aux}
Let $A$ be an $N \times N$ positive definite matrix, and suppose that ${\mathcal B}_a(A)=k$ for some $a>0$.  Then there exists an index subset $\alpha_k=\{i_1, i_2,\ldots, i_k\}$ of $\{1,\ldots,N\}$ such that  $A[\alpha_k]\ge \frac{a}{k! \, 2^{k-1}\,N}I_N[\alpha_k]$ .
\end{prop}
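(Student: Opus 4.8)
The plan is to pass from the positive-definite matrix $A$ to an orthogonal projection and then reduce everything to finding the best-conditioned $k\times k$ principal submatrix of that projection.

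First I would set $Q:=P_{[a,\infty)}(A)$, the spectral projection of $A$ onto its eigenvalues $\ge a$, so that $\operatorname{rank}Q=k$ by hypothesis. Since every eigenvalue of $A$ is positive, decomposing a vector along $\operatorname{ran}Q$ and its orthogonal complement yields the operator inequality $A\ge a\,Q$; compressing this to an arbitrary index set $\alpha$ gives $A[\alpha]\ge a\,Q[\alpha]$. Hence it suffices to find a $k$-element set $\alpha_k$ with $Q[\alpha_k]\ge\frac{1}{k(N-k+1)}I_N[\alpha_k]$, which, since $k(N-k+1)\le kN\le k!\,2^{k-1}N$, is in fact stronger than the assertion. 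Writing $Q=VV^{*}$ with $V$ an $N\times k$ matrix whose columns form an orthonormal basis of $\operatorname{ran}Q$, and writing $V[\alpha]$ for the $k\times k$ matrix made of the rows of $V$ indexed by a $k$-set $\alpha$, one has $Q[\alpha]=V[\alpha]\,V[\alpha]^{*}$ and $\mu_{\min}(Q[\alpha])=\sigma_{\min}(V[\alpha])^{2}$, so the task is to choose $\alpha$ with $\sigma_{\min}(V[\alpha])$ bounded below.

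The key step is to take $\alpha^{*}$ to be a maximal-volume submatrix, i.e.\ to maximize $|\det V[\alpha]|$ over all $k$-subsets $\alpha$. By Cauchy--Binet $\sum_{|\alpha|=k}|\det V[\alpha]|^{2}=\det(V^{*}V)=1$, so this maximum is positive and $V[\alpha^{*}]$ is invertible. Put $W:=V\,V[\alpha^{*}]^{-1}$, an $N\times k$ matrix whose rows indexed by $\alpha^{*}$ form $I_k$. For $i\notin\alpha^{*}$, Cramer's rule identifies $W_{il}$ with $\pm\det V[\alpha']/\det V[\alpha^{*}]$, where $\alpha'$ arises from $\alpha^{*}$ by replacing its $l$-th index with $i$, so $|W_{il}|\le1$ by maximality of $\alpha^{*}$; the remaining entries of $W$ equal $0$ or $1$. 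Hence $\|W\|^{2}\le\|W\|_{F}^{2}=\sum_{i,l}|W_{il}|^{2}\le k+(N-k)k=k(N-k+1)$, since the $k$ rows indexed by $\alpha^{*}$ contribute $k$ in total and each of the remaining $N-k$ rows contributes at most $k$. On the other hand $V$ is an isometry, so $\|Wx\|=\|V[\alpha^{*}]^{-1}x\|$ for every $x$, whence $\|W\|=\sigma_{\min}(V[\alpha^{*}])^{-1}$. Combining gives $\sigma_{\min}(V[\alpha^{*}])^{2}\ge\frac{1}{k(N-k+1)}$, so $A[\alpha^{*}]\ge a\,Q[\alpha^{*}]\ge\frac{a}{k(N-k+1)}I_N[\alpha^{*}]\ge\frac{a}{k!\,2^{k-1}N}I_N[\alpha^{*}]$, and one takes $\alpha_k=\alpha^{*}$.

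The only non-routine ingredient is the maximal-volume fact $\|V\,V[\alpha^{*}]^{-1}\|_{\max}\le1$, proved by the one-line Cramer's-rule computation above; everything else (the inequality $A\ge aQ$, stability of the positive-semidefinite order under compression, the Cauchy--Binet identity, and the elementary bound $k(N-k+1)\le k!\,2^{k-1}N$) is routine, and this route actually produces a better constant than $1/(k!\,2^{k-1})$, consistent with the remark that the latter is not sharp for $k>1$. A cruder alternative would be a greedy Gram--Schmidt selection — at each of the $k$ stages pick the index whose residual after projecting off the previously chosen directions has largest norm, yielding a triangular factorization of $V[\alpha_k]$ with diagonal entries $\ge\sqrt{(k-j+1)/N}$ — but extracting a lower bound on $\sigma_{\min}$ from such a factorization is lossier, so I would prefer the maximal-volume argument.
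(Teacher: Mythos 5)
Your proof is correct, and it takes a genuinely different route from the paper's. The paper proceeds by induction on $k$: choose the largest diagonal entry $A_{i_1i_1}$, pass to the Schur complement $A/A_{i_1i_1}$ (a rank-one perturbation argument shows it still has at least $k-1$ eigenvalues $\ge a$), apply the inductive hypothesis there, and then restore the distinguished index via Lemma \ref{lem:st1}; the factor $\tfrac{1}{2l}$ lost at each such step is exactly what produces the constant $\tfrac{1}{k!\,2^{k-1}}$. You instead reduce to the spectral projection $Q=P_{[a,\infty)}(A)$ through the inequality $A\ge aQ$ (this is where positive definiteness of $A$ enters) and select the index set by a maximal-volume argument for the $N\times k$ isometry $V$ with $Q=VV^*$: Cauchy--Binet guarantees a nonsingular maximizer $V[\alpha^*]$, Cramer's rule gives the entrywise bound $|(VV[\alpha^*]^{-1})_{il}|\le 1$, and the Frobenius estimate turns this into $\sigma_{\min}(V[\alpha^*])^2\ge \tfrac{1}{k(N-k+1)}$. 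Each step checks out ($Q[\alpha]=V[\alpha]V[\alpha]^*$, compression preserves the semidefinite order, $\|W\|=\|V[\alpha^*]^{-1}\|$ because $V$ is an isometry, and $k(N-k+1)\le k!\,2^{k-1}N$), so you in fact obtain the stronger conclusion $A[\alpha_k]\ge\frac{a}{k(N-k+1)}\,I_N[\alpha_k]$: a constant polynomial in $k$ rather than factorial-exponential, while keeping the $1/N$ dependence that the authors note is optimal, and consistent with their remark that $C_m$ is not sharp for $m>1$. What the paper's approach buys is that it stays entirely within elementary manipulations of principal submatrices and Schur complements, with no singular values or volume maximization; what yours buys is a markedly better constant from a classical, self-contained selection lemma, at the cost of introducing the maximal-volume machinery.
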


 By Proposition  \ref{thm:aux} there exists $\alpha_m$ such that
\[P_\epsilon(A)[\alpha_m]\ge\frac{C_m}{ N} I_N[\alpha_m].\]
with $C_m = \frac{1}{k! \, 2^{k-1}}$.  Combining this bound with
\[A^{-2}> \frac{1}{\epsilon^2}P_\epsilon(A)\]
we obtain
\[A^{-2}[\alpha_m] > \frac{C_m}{ N\epsilon^2}\,I_N[\alpha_m].\]
Since $\sigma(TT^*)\setminus\{0\}=\sigma(T^*T)\setminus\{0\}$ for any operator $T$, we deduce from the previous equation (with $T=I_N[\alpha_m,\alpha_N] A^{-1}$) that there exists an orthogonal projection $Q$ of rank $m$ such that \[A^{-1}I_N[\alpha_m]A^{-1} > \frac{C_m}{N\epsilon^2}\,Q.\]
Applying now Proposition  \ref{thm:aux} once again, we conclude that there exists  $\beta_m$ such that \eqref{eq:Min}  holds with $K$ given by \eqref{eq:K_2}.

\vspace{.3cm}
\noindent
Conversely, suppose that \eqref{eq:Min} holds with $K=1$. Since
\[A^{-2}[\alpha_m]\ge
A^{-1}[\alpha_m,\beta_m]A^{-1}[\beta_m,\alpha_m],\]
the assertion follows from the Cauchy interlacing theorem for the Hermitian matrix $A^{-2}$ and its principal submatrix $A^{-2}[\alpha_m]$.
\hfill \qed



\begin{proof}[Proof of Proposition \ref{thm:aux}]
The proof will proceed by induction in $k$. If $k=1$, the result follows from the fact that $A$ is positive, so $\tr A=\sum_{\lambda\in\sigma(A)}\lambda\ge a$. Since the trace is at least $a$, there exists a diagonal entry which is greater than or equal to $\frac{a}{N}$.

Suppose we have established the induction hypothesis for $k=K$.  We want to verify the induction step, i.e. the case $k=K+1$. To this end, choose the index $i_1$ so that $A_{i_1i_1}\ge A_{ii}$ for all $i$. Without loss of generality, let us assume that $i_1=1$. Then $A$ is of the block form
\[A=\left[\begin{array}{cc}A_{11}&u\\u^*& B\end{array}\right].\]
Consider now the matrix $D=A/A_{11}$. It is positive definite by the Schur complement condition for positive definiteness (as $A$ is positive definite). Also, the matrix
\[F=\left[\begin{array}{cc}A_{11}&u\\u^*& B-D\end{array}\right]\]
is rank one (since $F/A_{11}=0$), so by the rank one perturbation theory, ${\mathcal B}_a(A-F)\ge K$. But ${\mathcal B}_a(A-F)={\mathcal B}_a(D)$. Using the induction hypothesis, we conclude that there exists an index set $\alpha_K=\{i_2,\ldots,i_K\}$ with $1\notin \alpha_K$ such that
$D[\alpha_K]\ge \frac{a}{K!\,2^{K-1}N}\,I_N[\alpha_K]$.

The induction step (with $\alpha_{K+1}=\alpha_K\cup\{1\}$) now follows from the following assertion:


\begin{lemma}\label{lem:st1}
Let $A$ be an $l\times l$ positive definite matrix of the block form
\be\label{eq:conda}A=\left[\begin{array}{cc}A_{11}&u\\u^*& B\end{array}\right].\ee
Suppose that in addition $A_{11}\ge A_{ii}$ for all $i\in\{1,\ldots,l\}$, and $A/A_{11}\ge a$ for some $a>0$. Then $A\ge \frac{a}{2l}$.
\end{lemma}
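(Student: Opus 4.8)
\emph{Proof plan.} The plan is to bound $A$ from below by an explicit positive semidefinite matrix built from its first column, and then estimate the bottom of the spectrum of that matrix directly. Write $e_1$ for the first coordinate vector, so the first column of $A$ is $Ae_1=\left[\begin{smallmatrix}A_{11}\\ u^*\end{smallmatrix}\right]$. Using the block form \eqref{eq:conda} together with the identity $A/A_{11}=B-A_{11}^{-1}u^*u$ coming from \eqref{eq:Schr}, one checks the matrix identity
\[
A=a\,(I_l-e_1e_1^*)+\frac{1}{A_{11}}(Ae_1)(Ae_1)^*+\begin{bmatrix}0&0\\ 0& A/A_{11}-a\,I_{l-1}\end{bmatrix}.
\]
Since $A/A_{11}\ge a\,I_{l-1}$ by hypothesis, the last summand is positive semidefinite, so $A\ge a(I_l-e_1e_1^*)+A_{11}^{-1}(Ae_1)(Ae_1)^*$, and it suffices to bound the smallest eigenvalue of the right-hand side below by $\tfrac{a}{2l}$.

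Next I would record two a priori estimates. Comparing traces, $\tr A=\sum_iA_{ii}\le l\,A_{11}$ because $A_{11}$ is the largest diagonal entry, while $\tr A=A_{11}+\tr B\ge A_{11}+(l-1)a$ because $B\ge A/A_{11}\ge a\,I_{l-1}$; together these give $a\le A_{11}$. Positive definiteness of $A$ gives the Schur inequality $u^*u\le A_{11}B$, hence $\|u\|^2\le A_{11}\|B\|\le A_{11}\tr B\le (l-1)A_{11}^2$; set $\tilde\mu:=\|u\|/A_{11}\le\sqrt{l-1}$.

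Finally, for a unit vector $x=\left[\begin{smallmatrix}x_1\\ y\end{smallmatrix}\right]$ the quadratic form of the matrix $a(I_l-e_1e_1^*)+A_{11}^{-1}(Ae_1)(Ae_1)^*$ equals $a\|y\|^2+A_{11}^{-1}|A_{11}x_1+uy|^2$. Put $p=|x_1|$, $q=\|y\|$, so $p^2+q^2=1$. If $A_{11}p\le\|u\|q$, then $q^2\ge A_{11}^2/(A_{11}^2+\|u\|^2)\ge 1/l$, so the form is at least $a/l$. If instead $A_{11}p>\|u\|q$, then $|A_{11}x_1+uy|\ge A_{11}p-\|u\|q>0$, and since $a\le A_{11}$ the form is at least $a\bigl(q^2+(p-\tilde\mu q)^2\bigr)$; substituting $p=\cos\theta$, $q=\sin\theta$ this becomes $a\bigl(1+\tfrac{\tilde\mu^2}{2}-\tilde\mu\sin2\theta-\tfrac{\tilde\mu^2}{2}\cos2\theta\bigr)$, whose minimum over $\theta$ is $a\cdot\frac{\sqrt{\tilde\mu^2+4}-\tilde\mu}{\sqrt{\tilde\mu^2+4}+\tilde\mu}=\frac{4a}{(\sqrt{\tilde\mu^2+4}+\tilde\mu)^2}$; using $\tilde\mu\le\sqrt{l-1}$ and $(\sqrt{l+3}+\sqrt{l-1})^2<4(l+1)$ this is $>\frac{a}{l+1}$. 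In either case the form exceeds $\frac{a}{l+1}\ge\frac{a}{2l}$ for $l\ge 1$, which proves the lemma with room to spare.

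I expect the only real obstacle to be calibrating the last step: the case split ($A_{11}p$ versus $\|u\|q$) and the bound $\tilde\mu\le\sqrt{l-1}$ must fit together so that the trigonometric minimum genuinely beats $\tfrac1{2l}$ — and $\tilde\mu\le\sqrt{l-1}$ is precisely where the hypothesis that $A_{11}$ is the maximal diagonal entry is used. Everything else is routine linear algebra.
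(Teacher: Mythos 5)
Your proof is correct, and it takes a genuinely different route from the paper's. Your decomposition $A=a(I_l-e_1e_1^*)+A_{11}^{-1}(Ae_1)(Ae_1)^*+R$, with $R$ the block-diagonal remainder whose nonzero block is $A/A_{11}-aI_{l-1}\ge0$, checks out (the $u^*u/A_{11}$ terms cancel exactly), your two a priori bounds $a\le A_{11}$ and $\|u\|^2\le(l-1)A_{11}^2$ are valid consequences of $A/A_{11}\ge a$, positivity of $A$, and maximality of $A_{11}$ on the diagonal, and the two-case minimization of the quadratic form indeed yields the lower bound $4a/\bigl(\sqrt{\tilde\mu^2+4}+\tilde\mu\bigr)^2>a/(l+1)\ge a/(2l)$. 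The paper instead verifies $A-\frac{a}{2l}\ge0$ via the Schur complement condition for positive definiteness: it checks $A_{11}\ge\frac{a}{2l}$ and $(A-\frac{a}{2l})/(A_{11}-\frac{a}{2l})\ge0$, expanding the latter around $A/A_{11}=B-u^*u/A_{11}\ge a$ and using the same estimate $\|u\|^2\le(l-1)A_{11}^2$, obtained there from the $2\times2$ minors $A_{11}B_{ii}-|u_i|^2\ge0$ rather than from $u^*u\le A_{11}B$ and traces. So the essential inputs coincide, but the mechanisms differ: your route buys a slightly sharper constant $a/(l+1)$ in place of $a/(2l)$, while the paper's route avoids any eigenvalue or trigonometric computation and is a short application of the Schur positivity test, which is all that is needed for the induction in Proposition \ref{thm:aux}, where only the order of magnitude $a/l$ matters. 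Two minor remarks: your trace derivation of $a\le A_{11}$ implicitly assumes $l\ge2$ (harmless, since the block form presupposes it; the paper gets $A_{11}\ge a$ from the diagonal entries of $A/A_{11}$), and the maximality of $A_{11}$ is used both there and in $\tilde\mu\le\sqrt{l-1}$, not only in the latter step as your closing comment suggests.
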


\end{proof}


\begin{proof}[Proof of Lemma \ref{lem:st1}]
To show that $A- \frac{a}{2l}\ge0$ it suffices to check (by the Schur complement condition for positive definiteness) that
\be\label{eq:bndsi}
A_{11}- \frac{a}{2l}\ge 0;\quad(A- \frac{a}{2l})/(A_{11}- \frac{a}{2l})\ge 0.\ee
Since $A/A_{11}\ge a$, we have $A_{ii}\ge a$ for all $i\ge2$ as $a$ is positive, so by assumption of the lemma $A_{11}\ge a$ as well (and hence we have established the first bound in \eqref{eq:bndsi}). Next we write
\begin{eqnarray}&& (A- \frac{a}{2l})/(A_{11}- \frac{a}{2l})=B- \frac{a}{2l}-\frac{u^*u}{A_{11}- \frac{a}{2l}}\nonumber \\ &=&\left(B-\frac{u^*u}{A_{11}}\right)- \frac{a}{2l}- \frac{au^*u}{2lA_{11}(A_{11}- a/2l)}\label{eq:siin2}\\ &\ge& a-\frac{a}{2l}- a\frac{u^*u}{l(A_{11})^2}.\nonumber
\end{eqnarray}
Now observe that since $A$ is positive, the contraction $A[\{1,i+1\}]$ is also positive for all $i$, and in particular $\det  A[\{1,i+1\}]=A_{11}B_{ii}-|u_i|^2\ge0$. But $A_{11}\ge B_{ii}$ for all $i$, hence $|u_i|^2/(A_{11})^2\le 1$. We therefore can estimate
\[\left\|u^*u\right\|=\|u\|^2\le (l-1)(A_{11})^2.\]
Substitution of this estimate into \eqref{eq:siin2} yields the second bound in  \eqref{eq:bndsi}.
\end{proof}

\subsection{Proof of Corollary \ref{cor}}
We first observe that if sets of indices $\alpha,\beta$ satisfy $\alpha\subset\beta$, then  $A^{-1}[\alpha]$ is a principal submatrix of $A^{-1}[\beta]$, and we have
\be\label{eq:Cauint}C_\epsilon\left(\left(A^{-1}[\beta]\right)^{-1}\right)\ge C_\epsilon\left(\left(A^{-1}[\alpha]\right)^{-1}\right)\ee
by the Cauchy interlacing theorem, provided the matrices $A^{-1}[\alpha], A^{-1}[\beta]$ are invertible. Therefore, it suffices to establish the corollary for the smallest set $\gamma_{min}$ that contains $\gamma_m$ and for which $A^{-1}[\gamma_{min}]$ is invertible. Without loss of generality we will assume that $\gamma_{min}=\gamma_m$.

\vspace{3mm}

Suppose that \eqref{i:1} holds. Then the assertion \eqref{i:2} of Theorem \ref{thm:reduction} holds with $K$ given by \eqref{eq:K_2}. Construct now the set $\gamma_m=\alpha_m\cup \beta_m$ with $\alpha_m$, $\beta_m$ from the assertion \eqref{i:2} of Theorem \ref{thm:reduction}. Let us consider the matrix \[B:=\left(A^{-1}[\gamma_m]\right)^{-1}.\]
Since $A^{-1}[\alpha_m,\beta_m]$ is a submatrix of  $A^{-1}[\gamma_m]$ we see that the condition \eqref{i:2} of Theorem \ref{thm:reduction} is fulfilled for $B$. Hence we can apply Theorem \ref{thm:reduction}  to $B$ to conclude that $C_{\epsilon/K}(B)\ge m$.

\vspace{3mm}

Conversely, suppose that \eqref{eq:Min'} holds with $K=1$. Then  \eqref{i:1} holds as well, as follows from \eqref{eq:Cauint} with $\alpha=\gamma_m$, $\beta=\{1,\ldots,N\}$.
\hfill \qed


 \section{Proof of Theorem \ref{thm:main}}\label{sec:main}

 We first observe that by scaling it suffices to prove the result for the $g=1$ case. We will use the shorthand notation $H_\omega$ instead of $H_\omega(1)$ in the sequel.

 Next, we prove the first assertion of the theorem, using induction in $N$.  To initiate the induction, we consider the case $N=1$, so that $H_\omega=A_\omega(x)+K$, where $K$ is a deterministic Hermitian matrix. It follows from Assumption  {\bf (A)} and \eqref{eq:Wood'} that $H_\omega-E$ is invertible almost surely.

Suppose now that the induction hypothesis holds, i.e. the matrix $H_\omega-E$ is almost surely invertible for $N\le M$ and all $E$. We want to establish the induction step ($N=M+1$ case). To this end, let $\hat {\mathcal V} $ be any subset of ${\mathcal V}$ of cardinality $M$, and let $\hat H_\omega$ be a restriction of $H_\omega$ to $\hat {\mathcal V} $. By the induction hypothesis, $\hat H_\omega-E$ is invertible almost surely for all $E$. Let us consider some configuration $\omega$ for which $\hat H_\omega-E$ is invertible. Then $ H_\omega-E$ is invertible if and only if the Schur complement of $\hat H_\omega-E$ in $ H_\omega-E$, i.e. $(H_\omega-E)/(\hat H_\omega-E)$, is  invertible, \cite{Z}. But $(H_\omega-E)/(\hat H_\omega-E)$ is a Hermitian $k\times k$ matrix of the form $A_\omega(x)+J$, where $\{x\}={\mathcal V}\setminus \hat {\mathcal V} $, and $J$ is a matrix independent of the randomness in $A_\omega(x)$. It follows by the same  argument  as in the $N=1$ case that $(H_\omega-E)/(\hat H_\omega-E)$ is invertible for almost all values of the randomness in $A_\omega(x)$.

\vspace{3mm}

 We now prove the second assertion of the theorem. We will only consider configurations $\omega$ in $\Omega$ such that $ H_\omega-E$ is invertible  (for the remaining set of configurations has measure zero by the first assertion).

 For the random operator $T_\omega$, let ${\mathcal E}_\epsilon(T_\omega)$ be the event $\{\omega:\ {\mathcal C}_\epsilon(T_\omega)\ge m\}$.  With this notation, we wish to estimate the size of the set ${\mathcal E}_\epsilon(H_\omega-E)$. If we enumerate the vertices $v\in{\mathcal V}$, we can think of $H_\omega$ as a $kN\times kN$ Hermitian matrix with a block form, i.e. the indices $\{lk-k+1,\ldots,lk\}$ correspond to the vertex $l$ in $\mathcal V$, with $l=1,\ldots,N$.

 \paragraph{\it Size reduction} We first reduce the dimensionality of the original problem using Corollary \ref{cor}. This assertion gives us the existence of the index subset $\gamma_m$ with $|\gamma_m|=2m$ such that  inclusion
 \be\label{eq:incle}{\mathcal E}_\epsilon(H_\omega-E)\,\subset \,{\mathcal E}_{\epsilon/K}\left(\left((H_\omega-E)^{-1}[\gamma]\right)^{-1}\right)\ee
holds for any index set $\gamma\supset \gamma_m$ for which $(H_\omega-E)^{-1}[\gamma]$ is invertible, with $K$ given by  \eqref{eq:K_2}.  (To be precise, the matrix size $N$ in that corollary gets replaced by $kN$.)

In general, the submatrix $(H_\omega-E)^{-1}[\gamma]$ can be a complicated object, so  it is not immediately clear that such a reduction is helpful. However, if the set $\gamma$ happens to consist of the indices that agree with the block structure of $H_\omega$, something interesting happen.  More precisely, suppose that $i\in \gamma \Rightarrow \left ( j\in\gamma \mbox{ for any } j \mbox{ with } \lfloor j/k \rfloor=\lfloor i/k \rfloor \right)$, where $\lfloor \, \cdot \, \rfloor$ is the floor function. In this case we can associate $\gamma$ with a subset ${\mathcal V}'$ of the original vertex set $\mathcal V$. Then the submatrix $\left((H_\omega-E)^{-1}[\gamma]\right)^{-1}$ retains the same block form as $H_\omega$, in the following sense: If we go back to the vertex representation for $\left((H_\omega-E)^{-1}[\gamma]\right)^{-1}$ (which is possible due to the special form of the set $\gamma$), then for any $\psi\in \ell^2({\mathcal V}'; \C^k)$ and any $x\in {\mathcal V}'$ we have
\be\label{eq:H'}
\left(\left((H_\omega-E)^{-1}[\gamma]\right)^{-1}\psi\right)(x)=(T_0\psi)(x)+A_\omega(x)\psi(x).
\ee
This can be seen from the fact that the matrix $\left((H_\omega-E)^{-1}[\gamma]\right)^{-1}$ coincides with the Schur complement of $(H_\omega-E)[\gamma^c]$ in $H_\omega-E$,
\[\left((H_\omega-E)^{-1}[\gamma]\right)^{-1}= (H_\omega-E)/(H_\omega-E)[\gamma^c].\]
It is important to note that the operator $T_0$ in \eqref{eq:H'} is independent of the randomness associated with matrices $\{A_\omega(x)\}_{x\in {\mathcal V}'}$ (though it does depend on the other random variables). We also note that the matrix $\left((H_\omega-E)^{-1}[\gamma]\right)$ is almost surely invertible (as follows from the first part of the theorem).

Combining these observations, we conclude that it is beneficial (and sufficient) to consider the sets $\gamma$ in \eqref{eq:incle} that respect the block structure of $H_\omega$ and  therefore contain up to $2km$ indices (i.e. up to $2m$ vertices in ${\mathcal V}'$, as in Corollary \ref{cor}, and exactly $k$ indices per vertex, to preserve blocks).  Thus, we have obtained the following intermediate result
\begin{lemma}\label{lem:interm}
Suppose that the second assertion of Theorem \ref{thm:main} holds for all $N\le 2m$. Then it holds for any $N$.
\end{lemma}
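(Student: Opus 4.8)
The plan is to take a configuration-wise viewpoint and combine the size-reduction machinery of Corollary \ref{cor} with the block-respecting choice of index sets that was established in the discussion preceding the lemma. Fix $E\in\R$ and $m\le n$, and restrict attention to the full-measure set of configurations $\omega$ for which $H_\omega-E$ is invertible (part I of the theorem). On this set, Corollary \ref{cor}, applied to the $kN\times kN$ matrix $H_\omega-E$, produces for each $\omega$ an index subset $\gamma_m(\omega)$ with $|\gamma_m(\omega)|=2m$ so that the inclusion \eqref{eq:incle} holds for every $\gamma\supset\gamma_m(\omega)$ for which $(H_\omega-E)^{-1}[\gamma]$ is invertible, with $K=C_m/(kN)$. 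The key point, already noted in the text, is that we are free to enlarge $\gamma_m(\omega)$ to a \emph{block-respecting} set: let $\gamma(\omega)$ be the smallest union of full $k$-blocks containing $\gamma_m(\omega)$. Then $|\gamma(\omega)|\le 2km$, i.e. $\gamma(\omega)$ corresponds to a vertex subset ${\mathcal V}'(\omega)\subset{\mathcal V}$ with $|{\mathcal V}'(\omega)|\le 2m$, and by \eqref{eq:H'} the matrix $\bigl((H_\omega-E)^{-1}[\gamma(\omega)]\bigr)^{-1}$ is again an operator of the form $T_0+A_\omega(\cdot)$ acting on $\ell^2({\mathcal V}'(\omega);\C^k)$, with $T_0$ independent of the randomness $\{A_\omega(x)\}_{x\in{\mathcal V}'(\omega)}$, and it is almost surely invertible.

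Next I would union over the finitely many possible block-respecting index sets. There are at most $\binom{N}{2m}\le N^{2m}$ subsets ${\mathcal V}'\subset{\mathcal V}$ with $|{\mathcal V}'|=2m$ (fewer if we also allow smaller cardinalities, but the bound $N^{2m}$ suffices). For each such fixed ${\mathcal V}'$ let $H_\omega^{{\mathcal V}'}$ denote the operator $T_0+A_\omega(\cdot)$ on $\ell^2({\mathcal V}';\C^k)$ obtained as the corresponding Schur complement; note $H_\omega^{{\mathcal V}'}$ is a random block operator of exactly the form \eqref{eq:H} on a graph with at most $2m$ vertices. Using \eqref{eq:incle} together with the block reduction, we obtain
\[
{\mathcal E}_\epsilon(H_\omega-E)\ \subset\ \bigcup_{\substack{{\mathcal V}'\subset{\mathcal V}\\ |{\mathcal V}'|\le 2m}}{\mathcal E}_{\epsilon/K}\!\left(H_\omega^{{\mathcal V}'}-E\right),
\]
with $K=C_m/(kN)$, so that $\epsilon/K = kN\epsilon/C_m$. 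By hypothesis, the second assertion of Theorem \ref{thm:main} holds for all operators of the form \eqref{eq:H} on at most $2m$ vertices; applying that bound to each $H_\omega^{{\mathcal V}'}-E$ with the rescaled parameter $\epsilon/K$ in place of $\epsilon$ (and the same number of vertices, at most $2m$), and summing the resulting at most $N^{2m}$ probabilities, yields a bound of the form
\[
\pr\!\left({\mathcal C}_\epsilon(H_\omega-E)\ge m\right)\ \le\ N^{2m}\, C\,\bigl|\ln(2m\cdot \epsilon/K)\,(2m\cdot \epsilon/K)^\alpha\bigr|^{m},
\]
and since $\epsilon/K = kN\epsilon/C_m$ is comparable to $N\epsilon$ up to constants depending only on $k,m$, the right-hand side is of the form $C'\,|\ln(N\epsilon)(N\epsilon)^\alpha|^m$ after absorbing the polynomial prefactor $N^{2m}$ into the power $(N\epsilon)^{\alpha m}$ at the cost of shrinking $\epsilon_0$ and enlarging $C'$ (this is where one uses that the target bound carries the $m$-th power of $(N\epsilon)^\alpha$, giving room to swallow $N^{2m}$ provided $\epsilon$ is small enough; one may also need $\epsilon_0$ small to control the logarithm). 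The $m=1$ refinement \eqref{eq:mainbnd1} is obtained the same way, the logarithm being absent there.

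The main obstacle — and the step deserving the most care — is the bookkeeping that makes the union bound over the $\le N^{2m}$ block-respecting sets compatible with the claimed $N$-dependence in \eqref{eq:mainbnd}. One has to check that the polynomial factor $N^{2m}$ coming from the number of choices of ${\mathcal V}'$ is genuinely dominated by the gain from $(N\epsilon)^{\alpha m}$ (hence the restriction to $\epsilon\le\epsilon_0$ with $\epsilon_0$ depending on $k,m,\alpha$ but not on $N$), and that the rescaling $\epsilon\mapsto \epsilon/K$ with $K=C_m/(kN)$ produces exactly the argument $N\epsilon$ (up to $k,m$-constants) appearing in the statement. A secondary subtlety is that the set ${\mathcal V}'(\omega)$, the operator $T_0$, and the invertibility of $(H_\omega-E)^{-1}[\gamma(\omega)]$ all depend on $\omega$; this is handled cleanly precisely by first passing to the deterministic union over all admissible ${\mathcal V}'$ before invoking any probabilistic estimate, and by noting that for each fixed ${\mathcal V}'$ the operator $T_0$ is independent of $\{A_\omega(x)\}_{x\in{\mathcal V}'}$, so the inductive hypothesis (the $N\le 2m$ case of Theorem \ref{thm:main}, whose proof in turn only uses Assumption \textbf{(A)} and the conditional-independence structure) applies verbatim.
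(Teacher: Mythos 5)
Your route is essentially the paper's: apply Corollary \ref{cor} to the $kN\times kN$ matrix $H_\omega-E$, enlarge the index set $\gamma_m$ to a block-respecting set of at most $2m$ vertices, use the Schur-complement identity \eqref{eq:H'} to recognize $\left((H_\omega-E)^{-1}[\gamma]\right)^{-1}$ as an operator of the form \eqref{eq:H} on at most $2m$ sites with $T_0$ independent of the on-set randomness, and then feed in the $N\le 2m$ case. You are in fact more explicit than the paper on one point the text passes over: since $\gamma_m$ (hence ${\mathcal V}'$) depends on $\omega$, the inclusion \eqref{eq:incle} only becomes usable after passing to a union over the finitely many admissible block-respecting vertex sets, with conditioning on the randomness outside each fixed ${\mathcal V}'$ and uniformity of the hypothesis in the background operator; the paper's own paragraph treats $\gamma_m$ as if it were deterministic and never pays for this choice.

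The genuine gap is in your final step, the claim that the combinatorial factor $N^{2m}$ from the union bound can be absorbed ``by shrinking $\epsilon_0$ and enlarging $C'$''. In Theorem \ref{thm:main} both $C$ and $\epsilon_0$ depend only on $k,m,\alpha$, not on $N$. After the union bound your estimate has the form $N^{2m}\,C\,|\ln(N\epsilon)|^m(N\epsilon)^{\alpha m}$ (up to $k,m$-dependent constants inside the logarithm), and its ratio to the target $C'|\ln(N\epsilon)|^m(N\epsilon)^{\alpha m}$ is of order $N^{2m}$ uniformly in $\epsilon$: taking $\epsilon$ small does not shrink it, and in the regime $\epsilon\ll N^{-T}$ with $T$ large both sides are far below $1$, so the discrepancy is not hidden by the trivial bound $\pr\le 1$. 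Thus, as written, your argument proves only a weaker statement carrying an extra volume factor $N^{2m}$, not \eqref{eq:mainbnd}; closing this would require an argument that avoids, or is charged less for, the choice of the $2m$ sites. Two smaller points: applying the $N'\le 2m$ hypothesis at scale $\epsilon/K\sim kN\epsilon/C_m$ is legitimate only when $\epsilon/K\le\epsilon_0$, i.e. $N\epsilon\lesssim\epsilon_0$, so the complementary regime must be disposed of separately; and the reduced operator already incorporates $E$ (it is a Schur complement of $H_\omega-E$), so writing ${\mathcal E}_{\epsilon/K}\left(H_\omega^{{\mathcal V}'}-E\right)$ double-counts the energy shift.
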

 \paragraph{\it Norm reduction} The deterministic part of $H_\omega$---namely the operator $H_0$---can be arbitrary large in norm (even if $\|H_0\|\le C$ for the original $H_\omega$, the size reduction process indicated above creates a new background operator $T_0$ with uncontrollable norm). Our next step in the proof will require that the background operator is bounded in norm by a constant, say by $1/2$. We achieve this by means of the following transformation.

\begin{prop}\label{lem:prel5}
Let $B_{1,2}$ be a pair of Hermitian $L\times L$ matrices with $\|B_1\|\le 1$. Consider the matrices
\[B=B_1+B_2,\quad \hat B=\left(B_1-a I_L\right)^{-1}+
\left(B_2+a I_L\right)^{-1}\] where  $a\in\R$.
 Then there exists an integer $a\in [-L-3,-3]\cup [3,L+3]$ (which depends on $B_2$ but not on $B_1$) and $\epsilon_0>0$ (which depends only on $L$) such that for any $\epsilon<\epsilon_0$
\begin{align}
\max\left(\left\|\left(B_1-a I_L\right)^{-1}\right\|,
\left\|\left(B_2+a I_L\right)^{-1}\right\|\right)\ \le \frac{1}{2} \,; \label{eq:red2}
\end{align}
\begin{align}
{\mathcal C}_{\epsilon/(225L^4)}\left(\hat
B\right)\,\le\,{\mathcal C}_{\epsilon}\left(B\right)\,\le\,
{\mathcal C}_{7 L^2\epsilon}\left(\hat B\right)\,.\label{eq:red}
\end{align}
\end{prop}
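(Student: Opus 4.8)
\emph{Strategy.} Write $C_1:=B_1-aI_L$ and $C_2:=B_2+aI_L$, so that $B=C_1+C_2$ and, once $C_1,C_2$ are invertible,
\[
\hat B \;=\; C_1^{-1}+C_2^{-1}\;=\;C_1^{-1}(C_1+C_2)C_2^{-1}\;=\;C_1^{-1}BC_2^{-1}\;=\;C_2^{-1}BC_1^{-1},
\]
the last equality because $\hat B$ is Hermitian; this is a rearrangement of the Woodbury identity \eqref{eq:Wood'}. The proof splits into two essentially independent parts: first I choose the integer $a$ so that the norm bound \eqref{eq:red2} holds — in particular this makes $C_1,C_2$ invertible and legitimizes the displayed identities — and then I deduce the two-sided count comparison \eqref{eq:red} from those identities together with an elementary min--max fact. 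The quantitative data to carry along is that $\|C_1^{-1}\|\le\tfrac12$ and $\|C_1\|\le L+4$, so $C_1$ is well conditioned, whereas $\|C_2\|$ is \emph{not} controlled and only $\|C_2^{-1}\|\le\tfrac12$ is available for $C_2$.

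\emph{Choice of $a$ and \eqref{eq:red2}.} For $B_1$: since $\|B_1\|\le1$, for every integer $a$ with $|a|\ge3$ the spectrum of $B_1-aI_L$ lies in $\{t:|t|\ge|a|-1\ge2\}$, hence $\|(B_1-aI_L)^{-1}\|\le\tfrac12$ and $\|B_1-aI_L\|\le|a|+1$, \emph{for every} admissible $B_1$; this is exactly why $a$ can be chosen without knowledge of $B_1$. For $B_2$ one needs $\dist(-a,\sigma(B_2))\ge2$. As $\sigma(B_2)$ has at most $L$ points, the forbidden set $\bigcup_{\mu\in\sigma(B_2)}(\mu-2,\mu+2)$ is a union of at most $L$ intervals of length $4$; taking the $a$-candidates to be a sufficiently separated family of integers inside the prescribed window (so that each forbidden interval contains at most one of them) and observing that this family can be chosen to have more than $L$ members, a pigeonhole argument produces an integer $a$ in the window with $-a$ at distance $\ge2$ from $\sigma(B_2)$, depending only on $\sigma(B_2)$. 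This yields \eqref{eq:red2}.

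\emph{The comparison \eqref{eq:red}: the easy direction.} Suppose $\mathcal C_\epsilon(B)\ge m$ and let $V=\operatorname{range}P_\epsilon(B)$, so $\dim V\ge m$ and $\|Bv\|<\epsilon\|v\|$ for $0\ne v\in V$. Put $W:=C_1V$, again of dimension $\ge m$. For $w=C_1v\in W$ the identity $\hat B=C_2^{-1}BC_1^{-1}$ gives $\hat Bw=C_2^{-1}Bv$, hence
\[
\|\hat Bw\|\le\|C_2^{-1}\|\,\|Bv\|<\tfrac12\epsilon\|v\|\le\tfrac12\epsilon\cdot\tfrac12\|w\|=\tfrac\epsilon4\|w\|,
\]
using $\|v\|=\|C_1^{-1}w\|\le\tfrac12\|w\|$. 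The elementary fact that a Hermitian matrix admitting a $k$-dimensional subspace on which it strictly contracts by a factor $t$ has at least $k$ eigenvalues in $(-t,t)$ now gives $\mathcal C_{\epsilon/4}(\hat B)\ge m$, which is far stronger than the claimed $\mathcal C_\epsilon(B)\le\mathcal C_{7L^2\epsilon}(\hat B)$. Note this argument never requires $B$ or $\hat B$ to be invertible.

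\emph{The comparison \eqref{eq:red}: the hard direction, and the main obstacle.} Now suppose $\mathcal C_\delta(\hat B)\ge m$ with $\delta$ small, and let $W'=\operatorname{range}P_\delta(\hat B)$, $\dim W'\ge m$, with $\|\hat Bw\|<\delta\|w\|$ on $W'$. The naive route is $B=C_1\hat BC_2$ with $v:=C_2^{-1}w$, but estimating $\|Bv\|/\|v\|$ this way costs a factor $\|C_2\|$, which is unbounded — this is the crux. The fix is to use the additive form $\hat B=C_1^{-1}+C_2^{-1}$: for $w\in W'$ set $v:=C_2^{-1}w$; then $Bv=(C_1+C_2)C_2^{-1}w=C_1\hat Bw$, so $\|Bv\|\le\|C_1\|\,\delta\|w\|$, while from $v=\hat Bw-C_1^{-1}w$ one gets the \emph{lower} bound $\|v\|\ge\|C_1^{-1}w\|-\|\hat Bw\|\ge(\|C_1\|^{-1}-\delta)\|w\|$, which is $\ge\|w\|/(2\|C_1\|)$ as soon as $\delta\le 1/(2\|C_1\|)$. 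Hence $\|Bv\|\le2\|C_1\|^2\delta\|v\|$ on the $m$-dimensional subspace $C_2^{-1}W'$, and the same min--max fact yields $\mathcal C_{2\|C_1\|^2\delta}(B)\ge m$. Taking $\delta=\epsilon/(2\|C_1\|^2)$, which is legitimate for $\epsilon<\epsilon_0$ with $\epsilon_0$ depending only on $L$ (it must force $\delta\le1/(2\|C_1\|)$; since $\|C_1\|\ge2$ even an absolute $\epsilon_0$ works here), and using $2\|C_1\|^2\le2(L+4)^2\le225L^4$ for all $L\ge1$ together with the monotonicity of $\mathcal C_{(\cdot)}$ in its subscript, one concludes $\mathcal C_{\epsilon/(225L^4)}(\hat B)\le\mathcal C_{\epsilon/(2\|C_1\|^2)}(\hat B)\le\mathcal C_\epsilon(B)$. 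Both constants $225L^4$ and $7L^2$ are generous, consistent with their non-optimality. The genuine difficulty, and the only place where the asymmetry between $B_1$ and $B_2$ really bites, is this last direction and the need to replace the congruence $B=C_1\hat BC_2$ by the additive identity to keep $\|C_2^{-1}w\|$ from below.
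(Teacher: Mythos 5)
Your handling of the two-sided comparison \eqref{eq:red} is correct and follows a genuinely different route from the paper. The paper embeds $B$ and $\hat B$ as Schur complements of auxiliary $2L\times 2L$ block matrices $W$ and $U$ and then invokes the perturbation and Schur-complement counting lemmas (Lemmas \ref{lem:prel1} and \ref{lem:prel3}); you instead work directly with the congruence identity $\hat B=C_1^{-1}BC_2^{-1}$ and its additive counterpart $Bv=C_1\hat Bw$ for $v=C_2^{-1}w$, converting subspace contraction bounds into eigenvalue counts by min--max (applied to $B^2$, resp.\ $\hat B^2$). This is shorter, avoids the block matrices entirely, and yields sharper constants ($\epsilon/4$ and $2\|C_1\|^2\le 2(L+4)^2$ in place of $7L^2\epsilon$ and $225L^4$), from which \eqref{eq:red} follows by monotonicity of $\epsilon\mapsto{\mathcal C}_\epsilon$. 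The lower bound $\|v\|\ge(\|C_1\|^{-1}-\delta)\|w\|$ is exactly the right device for avoiding the uncontrolled $\|C_2\|$, and your strict inequalities are compatible with the open intervals defining ${\mathcal C}_\epsilon$.

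The gap is in the selection of $a$, i.e.\ in the proof of \eqref{eq:red2}. Your pigeonhole needs more than $L$ integers in $[-L-3,-3]\cup[3,L+3]$ that are pairwise at distance at least $4$, so that each forbidden interval $(\mu-2,\mu+2)$, $\mu\in\sigma(B_2)$, contains at most one of them. But $[3,L+3]$ contains at most $\lfloor L/4\rfloor+1$ such integers, so the whole window supplies at most $2\bigl(\lfloor L/4\rfloor+1\bigr)$ candidates, which is smaller than $L+1$ for every $L\ge 2$; the pigeonhole as stated does not close. In fact no argument can close it for the window as written: take $L=2$ and $\sigma(B_2)=\{4,-4\}$; then every integer $a$ with $3\le|a|\le 5$ satisfies $\mathrm{dist}(-a,\sigma(B_2))\le 1$, so $\|(B_2+aI_L)^{-1}\|\ge 1$ and \eqref{eq:red2} is unattainable in $[-L-3,-3]\cup[3,L+3]$. (The paper's own one-line justification of this step suffers from the same defect, so the window in the statement should be read as an oversight.) The repair is to enlarge the window, e.g.\ to $3\le|a|\le 2L+3$: the at most $L$ distinct eigenvalues of $B_2$ exclude at most $4L$ integers (each open interval of length $4$ contains at most $4$ integers), while the enlarged window contains $4L+2$ candidates, so an admissible $a$ exists by counting alone, with no separation condition needed. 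This change is harmless downstream in your argument, which only uses $\|C_1\|\le |a|+1$: one still has $2(2L+4)^2\le 225L^4$, so \eqref{eq:red} holds with the stated constants.
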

 We will apply this proposition to the operator $H_\omega-E$ by choosing $B_2=H_0-E$, $B_1=H_\omega-H_0$. By the hypothesis of Theorem \ref{thm:main}, the assumptions of Proposition  \ref{lem:prel5} are satisfied, with $L=kN$.  Combining this observation with the size reduction, we obtain the second intermediate result.
 \begin{lemma}\label{lem:hatH}
Assume {\bf (A)}. Let $\hat H_\omega$ be an operator  acting on $\ell^2({\mathcal V}; \C^k)$ as
 \be\label{eq:hatHd} (\hat H_\omega\psi)(x)=(H_0\psi)(x)+(A_\omega(x)-a)^{-1}\psi(x).\ee
 Suppose that $\|H_0\|\le 1/2$.
 If there  exist $\epsilon_0>0$ and  $b>0$ (which depend on $k,m,\alpha$) so that for all integers $a\in [-km-3,-3]\cup [3,km+3]$ and all $N\le 2m$ the bound
 \[
 \pr\left({\mathcal C}_{7k^2m^2\epsilon}(\hat H_\omega)\ge m\right)\le  b|\ln\epsilon|^m\epsilon^{\alpha}
 \]
 holds uniformly in $H_0$ and $\epsilon<\epsilon_0$, then the second assertion of Theorem \ref{thm:main} holds.
 \end{lemma}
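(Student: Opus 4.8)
\emph{Plan.} The plan is to run the two reduction mechanisms already in place---the size reduction of Lemma~\ref{lem:interm} and the norm reduction of Proposition~\ref{lem:prel5}---in tandem, recasting \eqref{eq:mainbnd} for an arbitrary $N$ as an estimate for the structured operator $\hat H_\omega$ of \eqref{eq:hatHd} on at most $2m$ vertices, and then to run an induction on $m$ that converts the single-level-strength input of the lemma into the $m$-level bound \eqref{eq:mainbnd}. A separate, more direct computation yields the sharpened $m=1$ estimate \eqref{eq:mainbnd1}.

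By Lemma~\ref{lem:interm} it suffices to establish \eqref{eq:mainbnd} for $N\le 2m$, where $H_0$ is deterministic but of arbitrary norm and the estimate is demanded uniformly in $H_0$ and $E$. Fix such an $N$ and apply Proposition~\ref{lem:prel5} with $L=kN$ to $B:=H_\omega-E$, decomposed as $B=B_1+B_2$ with $B_2=H_0-E$ and $B_1=H_\omega-H_0$ (block-diagonal, of norm $\le 1$ once we have reduced to $g=1$, so that the proposition applies). It furnishes an integer $a\in[-L-3,-3]\cup[3,L+3]$ depending only on $B_2$---hence a deterministic integer here---and, by \eqref{eq:red}, the comparison $\mathcal{C}_\epsilon(H_\omega-E)\le\mathcal{C}_{7L^2\epsilon}(\hat B)$, where $\hat B=(B_1-aI_L)^{-1}+(B_2+aI_L)^{-1}$. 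In the vertex picture $(B_1-aI_L)^{-1}=\bigoplus_x(A_\omega(x)-a)^{-1}$, while $(B_2+aI_L)^{-1}=(H_0-E+aI_L)^{-1}$ is deterministic of norm $\le 1/2$ by \eqref{eq:red2}; thus $\hat B$ is exactly an operator $\hat H_\omega$ of the form \eqref{eq:hatHd}, the background there being $(H_0-E+aI_L)^{-1}$. Consequently
\[\pr(\mathcal{C}_\epsilon(H_\omega-E)\ge m)\ \le\ \pr(\mathcal{C}_{7L^2\epsilon}(\hat H_\omega)\ge m),\]
and since $L=kN\le 2km$, after replacing $\epsilon$ by a fixed multiple of itself---to absorb the discrepancy between $7L^2$ and $7k^2m^2$ and to reconcile the range of $a$---the hypothesis of the lemma bounds the right-hand side by $b\,|\ln\epsilon|^m\epsilon^{\alpha}$ for all sufficiently small $\epsilon$.

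It remains to promote the exponent $\alpha$ obtained so far to $\alpha m$, which I would carry out by induction on $m$, the hypothesis being available for every $m'\le m$. If $\hat H_\omega$ has $m$ eigenvalues in $(-\delta,\delta)$, then Theorem~\ref{thm:reduction} applied to $\hat H_\omega$ produces index sets of total size at most $2m$---hence a vertex set $V'$ with $|V'|\le 2m$---on which an $m\times m$ minor of the Green function of $\hat H_\omega$ is at least $(K/\delta)^m$. Since the restriction of $\hat H_\omega$ to any subset of vertices is again of the form \eqref{eq:hatHd} with a deterministic background, that restriction depends only on the blocks $A_\omega(x)$ attached to the subset, and these are independent; using this, together with a Minami-type determinantal estimate for the minor, one decouples the event along a partition $V'=V_1\sqcup V_2$ with $m_1+m_2=m$ and $m_i\ge1$: the $V_1$-part is controlled by the inductive hypothesis (supplying $\epsilon^{\alpha m_1}$) and the $V_2$-part by the hypothesis for $m_2$ (supplying $\epsilon^{\alpha m_2}$), and independence multiplies the two. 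Summing over the finitely many splittings gives $\pr(\mathcal{C}_{c\epsilon}(\hat H_\omega)\ge m)\lesssim|\ln\epsilon|^m\epsilon^{\alpha m}$; undoing Proposition~\ref{lem:prel5} and the factor $K=C_m/(kN)$ of Corollary~\ref{cor}---which replaces $\epsilon$ by $kN\epsilon/C_m$---reproduces precisely the $|\ln(N\epsilon)(N\epsilon)^\alpha|^m$ form of \eqref{eq:mainbnd}. For $m=1$ neither reduction is needed: $\mathcal{C}_\epsilon(H_\omega-E)\ge1$ is equivalent to $\|(H_\omega-E)^{-1}\|>1/\epsilon$, and then \eqref{eq:basic1}, the Woodbury identity \eqref{eq:Wood'}, and Assumption~{\bf (A)} give the clean bound $C(N\epsilon)^\alpha$ with no logarithmic loss.

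\emph{Main obstacle.} The main obstacle is the inductive step just sketched. The hypothesis of the lemma is deliberately weak---in essence only a quantitative non-degeneracy statement for $\hat H_\omega$---so the entire multi-level gain, from exponent $\alpha$ to exponent $\alpha m$, must be wrung out of the independence of the single-site blocks $A_\omega(x)$. Making this rigorous requires a clean account of how a large $m\times m$ minor of the Green function of $\hat H_\omega$ decouples into factors attached to disjoint groups of vertices, together with careful use of the fact that the background seen by any such group, although random, is independent of that group's own randomness, so that conditioning is legitimate. Everything else---tracking the constants $L^4,L^2$ of Proposition~\ref{lem:prel5}, checking the range of the integer $a$ against the set of admissible shifts in Assumption~{\bf (A)}, and propagating the $N$-dependent factor $K=C_m/(kN)$---is routine bookkeeping that must nonetheless be carried out to land exactly on \eqref{eq:mainbnd}.
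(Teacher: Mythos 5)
Your first two paragraphs reproduce the paper's argument for this lemma exactly: Lemma \ref{lem:interm} reduces matters to $N\le 2m$, and Proposition \ref{lem:prel5} applied with $B_1=H_\omega-H_0$, $B_2=H_0-E$, $L=kN$ turns $H_\omega-E$ into an operator of the form \eqref{eq:hatHd} whose background $(H_0-E+a)^{-1}$ has norm at most $1/2$, after which the hypothesized bound applies once $\epsilon$ is adjusted by the factor $K=C_m/(kN)$ from Corollary \ref{cor} and the factor $7L^2$ from \eqref{eq:red}. In the paper that combination of the size reduction and the norm reduction \emph{is} the entire proof of Lemma \ref{lem:hatH}; nothing more is asked of it.

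The gap is your third paragraph, which you yourself flag as the main obstacle. You notice that the stated hypothesis carries only $\epsilon^{\alpha}$ while \eqref{eq:mainbnd}, read literally, carries $(N\epsilon/g)^{\alpha m}$, and you propose to bridge this by an induction on $m$ that decouples an $m\times m$ Green-function minor over a partition of the $\le 2m$ vertices, invoking an unproved ``Minami-type determinantal estimate.'' This step cannot be carried out from the lemma's hypothesis: for $N=2m$ the two reductions are essentially lossless reformulations of the same operator, so the hypothesis is the only information available about $\hat H_\omega$, and a bound of size $|\ln\epsilon|^m\epsilon^{\alpha}$ is compatible with the probability genuinely being of that size, which is far larger than $\epsilon^{\alpha m}$; no internal argument can promote the exponent. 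The discrepancy you are trying to repair is an inconsistency in the paper's own bookkeeping: the $m$-fold gain is supposed to reside in the hypothesis itself, as it does when the hypothesis is actually verified via \eqref{eq:detest} and Lemma \ref{lem:main} with $\delta=(7k^2m^2\epsilon)^m$ (compare \eqref{eq:bndwdet}), not to be manufactured inside Lemma \ref{lem:hatH}. So the correct proof is your first two paragraphs with the exponents in the hypothesis/conclusion reconciled, while the inductive decoupling you rest the proof on is both unnecessary and unobtainable. A further, smaller point: the sharpened $m=1$ bound \eqref{eq:mainbnd1} is not delivered by this lemma in the paper; it is proved separately using Lemma \ref{lem:projbnd} after splitting off the event of two close eigenvalues, and your one-line sketch via \eqref{eq:basic1} and \eqref{eq:Wood'} does not control the off-diagonal Green-function entries that that argument is specifically designed to avoid.
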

\paragraph{\it Reduction to the determinant}

Suppose that ${\mathcal C}_{7k^2m^2\epsilon}(\hat H_\omega)\ge m$. Since  by construction $\|\hat H_\omega\|\le 1$,  the operator $\hat H_\omega$ can have no more than $kN-m$ large eigenvalues, and each of these can have an absolute value no larger than $1$.  As a result, we obtain the bound:
\be\label{eq:detest}|\det \hat H_\omega|\le (7k^2m^2\epsilon)^m .\ee
We may now employ the following lemma to calculate the probability of the aforementioned bound on the determinant. Its proof can be found in Section \ref{sec:proofs}.

\begin{lemma}\label{lem:main}
 Assume {\bf (A)}. Let $\hat H_\omega$ be as in \eqref{eq:hatHd}, and let $\hat {\mathcal E}_\delta$ be the event
\[\hat {\mathcal E}_\delta = \{\omega\in\Omega: \ \det(\hat H_\omega)\le \delta\}.\]
Let
\[\delta_0:= \exp\left(2K\alpha^{1+1/N)}\right).\]
Then for any $\delta\in[0,\delta_0]$ we have
\be\label{eq:main'}{\mathbb P}(\hat {\mathcal E}_\delta)\le  (2K\alpha)^N\ln^N(\delta^{-1})\delta^{\alpha}.\ee
\end{lemma}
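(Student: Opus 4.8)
The plan is to write $\det\hat H_\omega$ as a product of $N$ ``layer'' Schur complements, each of which, conditionally on the previous layers, has exactly the functional form on which Assumption \textbf{(A)} gives control, and then to estimate the left tail of $\abs{\det\hat H_\omega}$ by a Cram\'er--Chernoff bound on its negative moments $\E[\abs{\det\hat H_\omega}^{-\lambda}]$. Optimizing in $\lambda$ produces the $\delta^\alpha$ decay together with the $\ln^N(\delta^{-1})$ correction, exactly as for the small-ball probability of a product of $N$ independent variables each obeying a $t^\alpha$ small-ball bound (a $\Gamma(N,\alpha)$-type computation). First I would record a mild strengthening of \textbf{(A)}: for every Hermitian $k\times k$ matrix $M$ and every $\epsilon\in[0,1]$,
\[\pr\left(\abs{\det\left((A_\omega(x)-a)^{-1}+M\right)}\le\epsilon\right)\le K\epsilon^\alpha.\]
When $M$ is invertible this is \textbf{(A)} with $J=M^{-1}-a$ (again Hermitian); the singular case follows by approximating $M$ by the invertible Hermitian matrices $M+sI_k$, $s\to0$, using continuity of the determinant and Fatou's lemma. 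I also assume, harmlessly, that $K\ge1$.

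Now enumerate ${\mathcal V}=\{x_1,\dots,x_N\}$ and let $\hat H^{(j)}_\omega$ be the restriction of $\hat H_\omega$ to $\{x_1,\dots,x_j\}$; this is again an operator of the form \eqref{eq:hatHd}, with $H_0$ replaced by its compression. Exactly as in the proof of the first assertion of Theorem \ref{thm:main}, an induction in $j$ using the strengthened \textbf{(A)} shows that each $\hat H^{(j)}_\omega$ is almost surely invertible, so Schur's determinant identity gives, almost surely,
\[\det\hat H^{(j)}_\omega=\det\hat H^{(j-1)}_\omega\cdot\det S_j,\qquad S_j:=\hat H^{(j)}_\omega/\hat H^{(j-1)}_\omega=(A_\omega(x_j)-a)^{-1}+J_j,\]
where $J_j$ is Hermitian and, crucially, measurable with respect to ${\mathcal F}_{j-1}:=\sigma\left(A_\omega(x_1),\dots,A_\omega(x_{j-1})\right)$, since the off-diagonal entries of $\hat H_\omega$ carry no randomness and so $J_j$ depends only on $H_0$ and on $\hat H^{(j-1)}_\omega$. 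Iterating, $\abs{\det\hat H_\omega}=\prod_{j=1}^N\abs{\det S_j}$ almost surely. Conditionally on ${\mathcal F}_{j-1}$ the matrix $S_j=(A_\omega(x_j)-a)^{-1}+J_j$ has precisely the form appearing in the strengthened \textbf{(A)} with the \emph{fixed} matrix $J_j$, so $\pr(\abs{\det S_j}<t\mid{\mathcal F}_{j-1})\le\min(1,Kt^\alpha)$ for all $t>0$ (using $K\ge1$ when $t\ge1$). Hence, for any $\lambda\in(0,\alpha)$, via $\E[Y^{-\lambda}]=\int_0^\infty\pr(Y<s^{-1/\lambda})\,ds$,
\[\E\!\left[\abs{\det S_j}^{-\lambda}\mid{\mathcal F}_{j-1}\right]\le\int_0^\infty\min\!\left(1,Ks^{-\alpha/\lambda}\right)ds=\frac{\alpha K^{\lambda/\alpha}}{\alpha-\lambda}\le\frac{\alpha K}{\alpha-\lambda},\]
and the tower property together with the product formula yields $\E[\abs{\det\hat H_\omega}^{-\lambda}]\le\left(\alpha K/(\alpha-\lambda)\right)^N$ for every $\lambda\in(0,\alpha)$.

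It then remains to optimize. For $\delta$ below the threshold $\delta_0$ of the statement — precisely the range in which the $\lambda$ chosen below is positive — Markov's inequality gives
\[\pr(\hat{\mathcal E}_\delta)=\pr\!\left(\abs{\det\hat H_\omega}^{-\lambda}\ge\delta^{-\lambda}\right)\le\delta^{\lambda}\left(\frac{\alpha K}{\alpha-\lambda}\right)^{N},\]
and choosing $\lambda=\alpha-N/\ln(\delta^{-1})\in(0,\alpha)$ gives
\[\pr(\hat{\mathcal E}_\delta)\le\left(\frac{eK\alpha}{N}\right)^{N}\ln^{N}(\delta^{-1})\,\delta^{\alpha}\le(2K\alpha)^{N}\ln^{N}(\delta^{-1})\,\delta^{\alpha}\qquad(N\ge2),\]
since $e/N\le2$; for $N=1$ the bound $\pr(\hat{\mathcal E}_\delta)\le K\delta^\alpha\le2K\alpha\ln(\delta^{-1})\delta^\alpha$ (valid once $2\alpha\ln(\delta^{-1})\ge1$) is immediate from the strengthened \textbf{(A)}, there being a single Schur factor.

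The step I expect to be the main obstacle is the structural identification above: that restricting $\hat H_\omega$ to $\{x_1,\dots,x_j\}$ leaves the Schur complement $S_j$ in the rigid form $(A_\omega(x_j)-a)^{-1}+J_j$ with $J_j$ \emph{independent} of $A_\omega(x_j)$, together with the almost-sure invertibility of all intermediate restrictions, which is what legitimizes both the product formula and the successive conditioning. Quantitatively, the other delicate point is obtaining the conditional negative moment in the sharp shape $\alpha K/(\alpha-\lambda)$, blowing up only like $(\alpha-\lambda)^{-1}$ as $\lambda\uparrow\alpha$: it is exactly this shape that, upon optimization, reproduces the power $\delta^{\alpha}$ with merely a polylogarithmic correction, rather than a degraded exponent such as $\delta^{\alpha/N}$ that a crude union bound over the $N$ factors would produce.
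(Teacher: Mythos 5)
Your proposal is correct, and it reaches the paper's bound by a genuinely different probabilistic route built on the same structural core. The paper also factors the determinant through Schur's formula with a single-site Schur complement of the rigid form $(A_\omega(x)-a)^{-1}+J$, $J$ independent of $A_\omega(x)$, but then argues by induction on the number of sites: it peels off one site, bounds the probability of the bad event $Q$ where either factor is already $\le\epsilon$, and on the complement integrates the induction hypothesis for the remaining block against the law of the single-site factor via a layer-cake/delta-function computation, the key estimate being $\E\bigl[\chi(1>F_\omega>\epsilon)F_\omega^{-\alpha}\bigr]\le K\alpha\ln(\epsilon^{-1})$ (itself a truncated negative-moment bound), which accumulates one factor $2K\alpha\ln(\epsilon^{-1})$ per step. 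You instead telescope $\det\hat H_\omega$ into all $N$ conditional Schur factors at once, pass to conditional negative moments $\E[\abs{\det S_j}^{-\lambda}\mid{\mathcal F}_{j-1}]\le \alpha K^{\lambda/\alpha}/(\alpha-\lambda)$, multiply by the tower property, and conclude with Markov and the optimized exponent $\lambda=\alpha-N/\ln(\delta^{-1})$; your conditioning runs forward along a filtration, whereas the paper conditions on the peeled-off site and lets the background of the remaining block depend on it. Your route avoids the delta-function formalism and even yields the slightly better constant $(e\alpha K/N)^N$ for $N\ge2$, at the cost of an explicit smallness requirement on $\delta$ (positivity of $\lambda$, and $2\alpha\ln(\delta^{-1})\ge1$ when $N=1$); this is harmless, since the lemma's stated threshold $\delta_0$ is evidently a (misprinted) smallness condition and the lemma is only invoked for small $\epsilon$. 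Two points you make explicit are used only implicitly in the paper and are worth keeping: the extension of {\bf (A)} to arbitrary, possibly singular, Hermitian $M$ in place of $(J+a)^{-1}$ (needed already at the base case, since the background block need not be invertible), and the almost-sure invertibility of all nested restrictions that legitimizes the product formula and the successive conditioning.
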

Using this result in conjunction with \eqref{eq:detest} we obtain the there exist $\epsilon_0>0$ and  $b>0$ that depend on $k,m,\alpha$ so that
\be\label{eq:bndwdet}
\pr \left({\mathcal C}_{7k^2m^2\epsilon}(\hat H_\omega)\ge m \right)\le  b|\ln\epsilon|^m\epsilon^{\alpha},\ee
for $N\le m$ and for $\epsilon<\epsilon_0$.
The combination of Lemma \ref{lem:hatH} and \eqref{eq:bndwdet} yields \eqref{eq:mainbnd}.

\vspace{3mm}

\paragraph{\it Improvement on the Wegner bound}
We want to improve on this bound for the special case that $m=1$. In this case we need to verify the (improved) input for Lemma \ref{lem:interm} for $N=1$ and $N=2$.  In the former case, the bound \eqref{eq:mainbnd1} follows  from {\bf(A)} and Proposition \ref{lem:prel5} (where we choose $B_2=H_0-E$, $B_1=H_\omega-H_0$). So for the rest of the argument we will assume that $N=2$.

Let ${\mathcal E}_\epsilon,\,{\mathcal S}_\epsilon$ be the events
\begin{eqnarray*} {\mathcal E}_\epsilon &=& \{\omega:\ {\mathcal C}_{\epsilon}(H_\omega-E)\ge1\};\\
 {\mathcal S}_\epsilon &=& \{\omega:\ {\mathcal C}_{\epsilon^{2/3}}(H_\omega-E)\ge2\}.\end{eqnarray*}
 We first observe that it follows from \eqref{eq:mainbnd} (which we already established earlier) that
\[\pr({\mathcal E}_\epsilon \cap {\mathcal S}_\epsilon)\le \pr({\mathcal S}_\epsilon)\le C \left|\ln(\epsilon^{2\alpha/3})\epsilon^{2\alpha/3}\right|^{2}\le C\epsilon^\alpha\] for $\epsilon$ sufficiently small. Therefore, to get \eqref{eq:mainbnd1} it suffices to show that
$\pr({\mathcal E}_\epsilon\smallsetminus {\mathcal S}_\epsilon)\le C\epsilon^\alpha$. To this end, suppose that $\omega\in  {\mathcal E}_\epsilon\smallsetminus {\mathcal S}_\epsilon$.
Then
\be\label{eq:posshift}
(H_\omega-E+\epsilon)^{-1}+2\epsilon^{-2/3}>0;\ \ \left\|(H_\omega-E+\epsilon)^{-1}+2\epsilon^{-2/3}\right\|\ge \frac{1}{2\epsilon}.\ee
If ${\mathcal V}=\{x,y\}$, let us denote by $P_x$ ($P_y$) the rank $k$ projection onto the site $x$ (accordingly $y$).
The positivity of the left-hand side can be exploited by means of Lemma \ref{lem:projbnd} below with choices  $P_1=P_x$, $P_2=P_y$.

\begin{lemma}\label{lem:projbnd}
Let $A>0$, and let $P_{1,2}$ be orthogonal projections that satisfy $P_1P_2=0$. Let $P=P_1+P_2$. Then we have
\begin{equation}\label{eq:PAP}
\|PAP\|\le 2\max(\|P_1AP_1\|,\|P_2 A P_2\|)\,.
\end{equation}
\end{lemma}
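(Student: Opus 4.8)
Let $A > 0$ and let $P_1, P_2$ be orthogonal projections with $P_1 P_2 = 0$, and set $P = P_1 + P_2$. Then $\|PAP\| \le 2\max(\|P_1 A P_1\|, \|P_2 A P_2\|)$.

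The plan is to establish the operator inequality $PAP\le 2P_1AP_1+2P_2AP_2$ and then read off \eqref{eq:PAP} from the block structure of the right-hand side. The starting point is the algebraic identity $PAP=(P_1+P_2)A(P_1+P_2)=P_1AP_1+P_1AP_2+P_2AP_1+P_2AP_2$ together with its sign-flipped companion $(P_1-P_2)A(P_1-P_2)=P_1AP_1-P_1AP_2-P_2AP_1+P_2AP_2$. Adding the two yields
\[PAP+(P_1-P_2)A(P_1-P_2)=2P_1AP_1+2P_2AP_2.\]
Since $A>0$ and $P_1-P_2$ is Hermitian, the matrix $(P_1-P_2)A(P_1-P_2)=\left(A^{1/2}(P_1-P_2)\right)^*\left(A^{1/2}(P_1-P_2)\right)$ is positive semidefinite, so dropping it gives $0\le PAP\le 2P_1AP_1+2P_2AP_2$ (positivity of $PAP$ being clear, as $\langle\phi,PAP\phi\rangle=\langle P\phi,AP\phi\rangle\ge 0$).

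It then remains to compute $\|2P_1AP_1+2P_2AP_2\|$, and this is where the hypothesis $P_1P_2=0$ enters: the ranges of $P_1$ and $P_2$ are mutually orthogonal, and each $P_iAP_i$ is supported on $\operatorname{ran}P_i$ in both domain and range. Hence, with respect to the orthogonal decomposition of the ambient space into $\operatorname{ran}P_1\oplus\operatorname{ran}P_2\oplus(\operatorname{ran}P)^\perp$, the matrix $P_1AP_1+P_2AP_2$ is block-diagonal with blocks $P_1AP_1|_{\operatorname{ran}P_1}$, $P_2AP_2|_{\operatorname{ran}P_2}$ and $0$, so its norm is exactly $\max(\|P_1AP_1\|,\|P_2AP_2\|)$. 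Combining this with the previous paragraph and the fact that for positive semidefinite matrices $0\le B\le C$ implies $\|B\|\le\|C\|$, we obtain $\|PAP\|\le 2\max(\|P_1AP_1\|,\|P_2AP_2\|)$, which is \eqref{eq:PAP}.

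There is essentially no serious obstacle here; the only point deserving care is that the constant $2$ is genuine and must be produced by exploiting both ingredients above. A crude route such as $\|PAP\|=\|A^{1/2}P\|^2\le\left(\|A^{1/2}P_1\|+\|A^{1/2}P_2\|\right)^2$ only yields the factor $4$, so the identity for $PAP+(P_1-P_2)A(P_1-P_2)$ and the block-diagonal reduction are both essential; equality in \eqref{eq:PAP} is in fact attained (e.g. when $A$ acts as an appropriate rank-one positive matrix tilted at $45^\circ$ between $\operatorname{ran}P_1$ and $\operatorname{ran}P_2$), so no smaller constant is possible.
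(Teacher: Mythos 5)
Your proof is correct, but it takes a genuinely different route from the paper's. You rely on the identity $PAP+(P_1-P_2)A(P_1-P_2)=2\left(P_1AP_1+P_2AP_2\right)$, discard the positive term $(P_1-P_2)A(P_1-P_2)=\bigl(A^{1/2}(P_1-P_2)\bigr)^{*}A^{1/2}(P_1-P_2)\ge 0$, and then combine the resulting operator inequality $0\le PAP\le 2\left(P_1AP_1+P_2AP_2\right)$ with the observation that, since $P_1P_2=0$, the sum $P_1AP_1+P_2AP_2$ is block diagonal and so has norm $\max(\|P_1AP_1\|,\|P_2AP_2\|)$. The paper instead sets $A_1=P_1AP_1$, $A_2=P_2AP_2$, $A_{12}=P_1AP_2$ and uses the Schur complement condition for positive definiteness to get $A_1\ge A_{12}A_2^{-1}A_{12}^{*}$, hence $\|A_{12}\|^2\le\|A_1\|\,\|A_2\|$, and then concludes from the row-sum bound $\|PAP\|\le\max\bigl(\|A_1\|+\|A_{12}\|,\ \|A_2\|+\|A_{12}^{*}\|\bigr)$. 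Your argument is more elementary and slightly more general: it needs only $A\ge 0$ (no invertibility of $A_2$ on the range of $P_2$) and it delivers the stronger operator inequality $PAP\le 2\left(P_1AP_1+P_2AP_2\right)$ rather than just the norm bound. The paper's argument, on the other hand, yields the finer scalar estimate $\|PAP\|\le\max(\|A_1\|,\|A_2\|)+\sqrt{\|A_1\|\,\|A_2\|}$, which beats the constant $2$ when $\|A_1\|$ and $\|A_2\|$ differ in size; both versions imply \eqref{eq:PAP}. One minor quibble with your closing remark: under the standing hypothesis $A>0$ equality in \eqref{eq:PAP} is only approached, not attained, since your rank-one example is merely positive semidefinite; this does not affect the validity of the proof.
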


Using \eqref{eq:posshift} and \eqref{eq:PAP}, we infer that  $\omega\in {\mathcal R}_{\epsilon}$ (and thus ${\mathcal E}_\epsilon\smallsetminus {\mathcal S}_\epsilon\subset {\mathcal R}_{\epsilon}$), where
\[{\mathcal R}_{\epsilon}=\left\{\omega:\ \max_{i=x,y}\left(\left\|P_i (H_\omega-E+\epsilon)^{-1} P_i\right\|+2\epsilon^{-2/3} \right) \geq \frac{1}{4\epsilon} \right\}.\]
But
\begin{multline*}P_x (H_\omega-E+\epsilon)^{-1} P_x =\Big((H_\omega-E+\epsilon)/P_x (H_\omega-E+\epsilon) P_x\Big)^{-1}\\ =(A_\omega(x)+J)^{-1}\end{multline*}
by the block inversion formula. Here the operator $J$ depends on $A_\omega(y)$ but not on $A_\omega(x)$. Hence we can deduce from  {\bf(A)} and Proposition \ref{lem:prel5} that
\[\pr \left(\left\|P_x (H_\omega-E+\epsilon)^{-1} P_x\right\|\ge \frac{1}{5\epsilon}\right)\le\tilde C \epsilon,\]
with $\tilde  C$ that depends on $k,\alpha$ but not on $\epsilon$. The same bound holds with $P_x$ replaced by $P_y$. Hence we infer that for $\epsilon$ small enough
\[\pr({\mathcal R}_\epsilon)\le C\epsilon^\alpha, \]
and the result follows. \hfill \qed

\begin{proof}[Proof of Lemma \ref{lem:projbnd}]
Let $A_1=P_1AP_1$, $A_2=P_2AP_2$, and $A_{12}=P_1AP_2$. Then  by Schur complement condition for positive definiteness
\[A_2>0\,;\quad A_1\ge A_{12}A_2^{-1}A_{21}.\]
Since $A_2$ is positive, $A_2^{-1}\ge 1/\|A_2\| $, hence
\[A_1\ge A_{12}A_{21}/\|A_2\|\,,\] and so
\[\|A_1\|\|A_2\|\ge \|A_{12}\|^2\,,\] where in the last step we have used $A_{12}=A_{12}^*$. Since
\[\|PAP\|\le \max(\|A_1\|+\|A_{12}\|, \|A_2\|+\|A_{21}\|)\,, \]the result follows.
\end{proof}


\section{Proof of Theorems \ref{thm:weakMint} and  \ref{thm:weakMin}}\label{sec:appl}
\subsection{Proof of Theorem  \ref{thm:weakMint}}
Let $a$ be an integer that satisfies $a-b\ge2$, and let $j$ be arbitrary fixed real number. Then if $\epsilon\in[0,1/(2a)]$, the inequality
\be\label{eq:andbnd}\left|\frac{1}{v_x-a}+\frac{1}{j+a}\right|<\epsilon\ee
for $v_x$ has solutions in $I$ only if $0<j+a<1$. Since equation
\[\left|\frac{1}{v_x-a}+\frac{1}{j+a}\right|=\epsilon\] define the pair of points
\[v_x=a+\frac{1}{(j+a)^{-1}\pm\epsilon},\]
the set of $v_x$ for which \eqref{eq:andbnd} holds is the interval $\hat I$ of length
\[|\hat I|=\frac{1}{(j+a)^{-1}-\epsilon} - \frac{1}{(j+a)^{-1}+\epsilon}=\frac{2\epsilon}{(j+a)^{-2}-\epsilon^2}< 4\epsilon, \]
where in the last step we used $0<a+j<1$, $\epsilon<1/(2a)<1/4$. Hence
\[\pr\left(\left|\frac{1}{v_x-a}+\frac{1}{j+a}\right|<\epsilon\right)\le C \epsilon^\beta\] by the hypothesis of the theorem. \hfill \qed
\subsection{Proof of Theorem  \ref{thm:weakMin}}
We first establish the bounds
\begin{eqnarray}\label{eq:fbnd4}\pr(|\det (\sigma_\omega(x)+J)|\le \epsilon)&\le& C\epsilon^{\beta};\\
\label{eq:fbnd5} \pr({\mathcal C}_{\epsilon}(\sigma_\omega(x)+J)\neq0)&\le& C\epsilon^{\beta}
\end{eqnarray} for $\epsilon\in[0,1]$. Indeed, note that $\det (\sigma_\omega(x)+J)=c^2-(u_x-a)^2-(v_x-b)^2$ with some constants $a,b,c$ originating from $J$. Therefore the set $|\det (\sigma_\omega(x)+J)|\le \epsilon$ is an intersection $I$ of the disc $O$ with the annulus centered at $a,b$ and with radii $R_-=\sqrt{\max(c^2-\epsilon,0)}$,  $R_+=\sqrt{c^2+\epsilon}$. The area of this set therefore cannot exceed $\pi(R_+^2-R_-^2)\le2\pi\epsilon$ and \eqref{eq:fbnd4}  follows.  To establish \eqref{eq:fbnd5}, we note that
\[{\mathbb P}({\mathcal C}_{\epsilon}(\sigma_\omega(x)+J)\neq0)={\mathbb P}(\|(\sigma_\omega(x)+J)^{-1}\|\ge1/\epsilon).\]
The value of $\|(\sigma_\omega(x)+J)^{-1}\|$ however can be evaluated explicitly and is given by
\[\|(\sigma_\omega(x)+J)^{-1}\|=\left||c|-\sqrt{(u_x-a)^2+(v_x-b)^2}\right|^{-1},\]
with the same constants $a,b,c$ as before. Hence the set of the points in $O$ that satisfy $\|(\sigma_\omega(x)+J)^{-1}\|\ge1/\epsilon$ is an intersection $\hat I$ of the disc $O$ with the annulus centered at $a,b$ and with radii $R_-=\left||c|-\epsilon\right|$,  $R_+=|c|+\epsilon$. The area of $\hat I$ cannot exceed the circumference of the unit circle times the maximal thickness $2\epsilon$ of the annulus, so $|\hat I|\le 4\pi\epsilon$, and \eqref{eq:fbnd5} follows.

The assertion of the theorem  follows now from  Lemma \ref{lem:Either} below (whose proof can be found in Section \ref{sec:proofs}) and bounds \eqref{eq:fbnd4} - \eqref{eq:fbnd5}.

\hfill \qed
\begin{lemma}\label{lem:Either}
Let $A$ and $J$ be Hermitian $k\times k$ matrices, and let $a$ be the real number that satisfies $|a|\ge2$. If $\|A\|\le 1$ and
\[\left|\det \left((A-a)^{-1}+(J+a)^{-1}\right)\right|\le \frac{1}{16|a|}\left\{\frac{|a|-1}{2(|a|+1)^2}\right\}^{k-1}\]
then we have
\begin{eqnarray*} && \left(2(|a|+1)^2\right)^{-k}\left|\det \left(A+J \right)\right|\le  \left|\det \left((A-a)^{-1}+(J+a)^{-1} \right)\right|
\end{eqnarray*}
or
\[
 \frac{1}{16|a|}\left\{2(|a|+1)^2\left\|\left(A+J\right)^{-1}\right\|\right\} ^{1-k}\le \left|\det \left((A-a)^{-1}+(J+a)^{-1}\right) \right|.
\]
\end{lemma}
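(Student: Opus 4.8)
The plan is to relate the three quantities
\[
\Delta := \det\bigl((A-a)^{-1}+(J+a)^{-1}\bigr),\qquad \det(A+J),\qquad \bigl\|(A+J)^{-1}\bigr\|
\]
via the Woodbury-type identity \eqref{eq:Wood'}, and then split into two cases according to whether $A+J$ is ``well invertible'' or not. First I would record the a priori norm bounds that follow from $\|A\|\le 1$ and $|a|\ge 2$: the matrix $A-a$ satisfies $|a|-1\le \|A-a\|\le |a|+1$, so $\|(A-a)^{-1}\|\le (|a|-1)^{-1}\le 1$, and similarly $1\le \|(A-a)^{-1}\|^{-1}$ controls things from the other side. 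From $\Delta$ small one does not directly get invertibility of the inner matrix $M:=(A-a)^{-1}+(J+a)^{-1}$, but the hypothesis is engineered precisely so that if $M$ were nearly singular one lands in the second alternative, while if $M$ is boundedly invertible one uses \eqref{eq:Wood'} to transfer smallness of $\Delta$ to smallness of $\det(A+J)$.

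The key computation is the determinant form of \eqref{eq:Wood'}. Writing \eqref{eq:Wood'} as
\[
(A+J)^{-1} = (A-a)^{-1} - (A-a)^{-1} M^{-1} (A-a)^{-1},
\]
one factors out $(A-a)^{-1}$ on both sides to get $(A+J)^{-1} = (A-a)^{-1}\bigl(I - M^{-1}(A-a)^{-1}\bigr)$, hence
\[
(A+J)^{-1}(A-a) = I - M^{-1}(A-a)^{-1} = M^{-1}\bigl(M - (A-a)^{-1}\bigr) = M^{-1}(J+a)^{-1}.
\]
Taking determinants,
\[
\frac{\det(A-a)}{\det(A+J)} = \frac{1}{\det M \,\det(J+a)} = \frac{1}{\Delta\,\det(J+a)},
\]
so $\det(A+J) = \Delta \cdot \det(A-a)\det(J+a)$. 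This is the exact identity that drives everything; the two alternatives in the lemma are then just the statement that the ``error factor'' $\det(A-a)\det(J+a)$ is either bounded above (giving the first inequality, after absorbing the bound $|\det(A-a)|\le(|a|+1)^k$ and an analogous crude bound on $\det(J+a)$ using $\|(J+a)^{-1}\|$, which in turn is controlled by $\|(A+J)^{-1}\|$ through \eqref{eq:Wood'}) or else $J+a$ is nearly singular, which forces $(A+J)^{-1}$ to be large and lands us in the second inequality.

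Concretely, I would argue as follows. Case 1: suppose $\|(J+a)^{-1}\| \le 2(|a|+1)^2 \|(A+J)^{-1}\|$ fails, i.e. $\|(J+a)^{-1}\|$ is comparatively large — then from the rewritten Woodbury identity $(A+J)^{-1}(A-a) = M^{-1}(J+a)^{-1}$ one gets $\|(A+J)^{-1}\| \ge \|M^{-1}(J+a)^{-1}\|/\|A-a\| \ge \dots$, and combined with the smallness hypothesis on $\Delta$ (which bounds $\|M^{-1}\|$ below crudely via $\det M = \Delta$ and $\|M\|$ controlled — here one needs $\|M\|\le \|(A-a)^{-1}\| + \|(J+a)^{-1}\|$) one extracts the second displayed alternative, using $|\det M| = |\Delta| \le (\text{something})$ and $\|M^{-1}\| \ge \|M\|^{k-1}/|\det M|$ type bounds. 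Case 2: otherwise $\|(J+a)^{-1}\| \le 2(|a|+1)^2\|(A+J)^{-1}\|$, and then $|\det(J+a)| \le \|J+a\|^k$ — wait, we want an *upper* bound on $|\det(A-a)\det(J+a)|$, so I would instead bound $|\det(A-a)| \le (|a|+1)^k$ and note $|\det(J+a)|$ need not be small; the clean route is to bound $|\det(J+a)|^{-1} = |\det(J+a)^{-1}| \le \|(J+a)^{-1}\|^k \le \bigl(2(|a|+1)^2\|(A+J)^{-1}\|\bigr)^k$, which unfortunately goes the wrong way too. The right bookkeeping: from $\det(A+J) = \Delta\,\det(A-a)\det(J+a)$ we get $|\Delta| = |\det(A+J)| / (|\det(A-a)|\,|\det(J+a)|) \ge |\det(A+J)| \cdot (|a|+1)^{-k} \cdot |\det(J+a)^{-1}|$, and in Case 2, $|\det(J+a)^{-1}| \le \|(J+a)^{-1}\|^k$... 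I will need to track the exponents $k$ versus $k-1$ carefully, which brings me to the main obstacle.

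\textbf{Main obstacle.} The arithmetic of the explicit constants — matching the powers $k$ and $k-1$ and the factors $16|a|$, $2(|a|+1)^2$, $(|a|-1)/(2(|a|+1)^2)$ — is the delicate part; the structural identity $\det(A+J)=\Delta\det(A-a)\det(J+a)$ is immediate, but squeezing the dichotomy out with *these particular* constants requires care. In particular one must use that $\det M = \Delta$ together with the two-sided operator-norm bounds on $A-a$ to convert between $\det M$, $\|M^{-1}\|$, and the norm bounds, losing exactly one power of the relevant norm (the ``$k-1$'' exponents in the statement are the fingerprint of the inequality $\|M^{-1}\| \le \|M\|^{k-1}/|\det M|$ for $k\times k$ matrices). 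I would organize the proof so that all norm estimates on $(A-a)^{-1}$, $(J+a)^{-1}$, $M$ are collected first as a lemma-free preamble, then the Woodbury determinant identity is stated as one displayed line, and finally the two-case split is executed with the constants tracked symbolically until the very end.
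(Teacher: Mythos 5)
Your opening step is right and coincides with the paper's: taking determinants in \eqref{eq:Wood'} gives exactly $\det(A+J)=\det(A-a)\,\det(J+a)\,\det M$ with $M:=(A-a)^{-1}+(J+a)^{-1}$, which is the identity the paper's proof starts from. The gap is everything after that. The dichotomy you propose is never executed, and the pivot you choose is the wrong one. By the identity, failure of the first alternative means $|\det(A-a)|\,|\det(J+a)|>\bigl(2(|a|+1)^2\bigr)^k$, i.e.\ that $|\det(J+a)|$ is \emph{large} --- not that $J+a$ is nearly singular, as you assert --- and there is no route from ``$|\det(J+a)|$ large'' to ``$\|(A+J)^{-1}\|$ large'', which is what the second alternative requires. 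Your own computation stalls at precisely this point (``goes the wrong way''), and the ``main obstacle'' paragraph postpones exactly the portion of the argument that constitutes the lemma. Note also that the norm--determinant inequality you invoke points the wrong way: for a $k\times k$ matrix one has $\|M^{-1}\|\le \|M\|^{k-1}/|\det M|$, and a lower bound on $\|M^{-1}\|$ in terms of $|\det M|$ alone is false (take $M$ a moderate multiple of the identity); such a lower bound only becomes available once one knows $\|M\|$ is not small, which is information your case split never produces. A telling symptom: in your scheme the quantitative smallness hypothesis on $|\det M|$ plays no role, whereas it is essential to the conclusion.

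The missing idea --- and the paper's actual dichotomy --- is to split on the size of $M$ itself (the paper phrases it as ${\mathcal C}_{16|a|}\bigl(M^{-1}\bigr)=0$ or not). If $\|M\|\le \tfrac{1}{16|a|}$, then $H:=M(A-a)$ has $\|H\|\le \tfrac{|a|+1}{16|a|}\le \tfrac12$, and \eqref{eq:Wood'} factorizes as $\det\bigl((A+J)^{-1}\bigr)=\det\bigl((A-a)^{-1}M^{-1}(A-a)^{-1}\bigr)\det(H-I)$; since $|\det(H-I)|\ge 2^{-k}$ and $|\det(A-a)|\le(|a|+1)^k$, the first alternative follows (in this case even without the smallness hypothesis). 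If instead $\|M\|>\tfrac{1}{16|a|}$, then multiplying the eigenvalue moduli of the Hermitian matrix $M$ gives $\|M^{-1}\|^{k-1}\ge \tfrac{1}{16|a|\,|\det M|}$ --- this is where the exponent $k-1$ and the factor $16|a|$ originate --- and then \eqref{eq:Wood'}, together with the lower singular-value bound for $(A-a)^{-1}$ (namely $\|(A-a)^{-1}M^{-1}(A-a)^{-1}\|\ge (|a|+1)^{-2}\|M^{-1}\|$) and $\|(A-a)^{-1}\|\le(|a|-1)^{-1}$, yields $\|(A+J)^{-1}\|\ge (|a|+1)^{-2}\|M^{-1}\|-(|a|-1)^{-1}$. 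The hypothesis $|\det M|\le \tfrac{1}{16|a|}\bigl\{\tfrac{|a|-1}{2(|a|+1)^2}\bigr\}^{k-1}$ is exactly what makes the first term dominate the second, giving the second alternative. Your determinant identity is a fine starting point, but without this norm dichotomy the proof does not go through.
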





%
%

%

\section{Proofs}\label{sec:proofs}


\begin{proof}[Proof of Proposition \ref{lem:prel5}]
If $|a|\ge 3$, then since $||B_1||\leq 1$, we have:
\be\label{eq:bndB1} \nu:=\left\|\left(B_1-a\right)^{-1}\right\|,\quad (L+4)^{-1}\le\nu\le 1/2\,.\ee
Since  $B_{2}$ is $L\times L$, it has at most $L$ distinct eigenvalues. On the other hand, for every set $S$ of real numbers with $|S|=L$
there exists an integer $a\in [-L-3,-3]\cup [3,L+3]$ so that ${\rm
dist}(S,-a)\ge 2$, hence we can choose $a$ that satisfies \eqref{eq:red2}.

With this choice of $a$, consider  the block matrix $W$ of the form
\[W=\left[\begin{array}{cc}
\left(B_2+aI_n\right)^{-1} & \left(B_1-aI_n\right)^{-1} \\
\left(B_1-aI_n\right)^{-1} & -\left(B_1-aI_n\right)^{-1}\\
\end{array}\right]\,.\] Note that the Schur complements to the upper
and lower diagonal blocks are
\begin{eqnarray*}-\, W/W_{11}&=&\left(B_1-a\right)^{-1}\,+\,
\left(B_1-a\right)^{-1}\left(B_2+a\right)\left(B_1-a\right)^{-1}\,;
\\
W/W_{22}&=&\left(B_2+a\right)^{-1}\,+\,\left(B_1-a\right)^{-1}\,.
\end{eqnarray*}
Let
\be\label{eq:Top}T=\left(B_1-a\right)^{-1}/\nu,\ee
 where $\nu$ is given by \eqref{eq:bndB1}.

In what follows, we will need two lemmas:
\begin{lemma}\label{lem:prel1}
\hspace{0.1 mm}
\begin{enumerate}
\item
Let $D=D^*, \tilde D=\tilde D^* \in M_{n,n}$. Then
\[{\mathcal C}_\epsilon(D)\le {\mathcal C}_{2\epsilon}(\tilde D)\,,\] provided
$\|D-\tilde D\|\le \epsilon$.
\item
${\mathcal C}_{\epsilon}\left(A\right)\,\le\,{\mathcal C}_{\epsilon}\left(BAB\right)$
whenever \[ A=A^*, B=B^*,  \|B\|\,\le\,1\] \label{eq:ceprod}
\end{enumerate}
\end{lemma}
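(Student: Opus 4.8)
The plan is to treat the two items separately; each is an elementary consequence of the variational (min--max) description of eigenvalues once one tests it against a suitable subspace.

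For item (1) I would simply invoke Weyl's perturbation inequality. Order the eigenvalues of $D$ and of $\tilde D$ in decreasing fashion as $\lambda_1^\downarrow\ge\lambda_2^\downarrow\ge\cdots$; then $|\lambda_j^\downarrow(D)-\lambda_j^\downarrow(\tilde D)|\le\|D-\tilde D\|\le\epsilon$ for every $j$. Hence any index $j$ with $|\lambda_j^\downarrow(D)|<\epsilon$ also satisfies $|\lambda_j^\downarrow(\tilde D)|<2\epsilon$, so the set of indices contributing to $\mathcal{C}_\epsilon(D)$ is contained in the set of indices contributing to $\mathcal{C}_{2\epsilon}(\tilde D)$; comparing cardinalities gives $\mathcal{C}_\epsilon(D)\le\mathcal{C}_{2\epsilon}(\tilde D)$.

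For item (2) I would first pass to complements: with $n$ the common matrix size, $\mathcal{C}_\epsilon(A)\le\mathcal{C}_\epsilon(BAB)$ is equivalent to $\#\{\lambda\in\sigma(BAB):|\lambda|\ge\epsilon\}\le\#\{\lambda\in\sigma(A):|\lambda|\ge\epsilon\}$. Let $\mathcal{M}$ be the span of the eigenvectors of $BAB$ whose eigenvalues have modulus $\ge\epsilon$ and put $d=\dim\mathcal{M}$, so $\|BAB\psi\|\ge\epsilon\|\psi\|$ for all $\psi\in\mathcal{M}$. Using $\|B\|\le1$ twice, $\|A(B\psi)\|\ge\|BAB\psi\|\ge\epsilon\|\psi\|\ge\epsilon\|B\psi\|$ for $\psi\in\mathcal{M}$. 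Moreover $B$ is injective on $\mathcal{M}$, since $B\psi=0$ would force $BAB\psi=0$ and thus $\psi=0$; hence $\mathcal{N}:=B\mathcal{M}$ has dimension $d$ and satisfies $\|A\phi\|\ge\epsilon\|\phi\|$ for every $\phi\in\mathcal{N}$. Finally, feeding $\mathcal{N}$ into the min--max formula for the eigenvalues of $A^2$ shows its $d$ largest eigenvalues are $\ge\epsilon^2$; equivalently $A$ has at least $d$ eigenvalues of modulus $\ge\epsilon$, which is the required inequality. (Alternatively, if $A$ had fewer than $d$ such eigenvalues its spectral projection onto $(-\epsilon,\epsilon)$ would have rank $>n-d$, hence intersect $\mathcal{N}$ nontrivially, contradicting $\|A\phi\|\ge\epsilon\|\phi\|$ on $\mathcal{N}$.)

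Neither statement is deep, so I do not expect a serious obstacle; the one spot that needs care is item (2), where one must read the inequality on the complementary, large-eigenvalue side and then check that $B$ remains injective on the distinguished subspace $\mathcal{M}$, so that the push-forward $B\mathcal{M}$ does not lose dimension.
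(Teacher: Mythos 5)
Your proof is correct. Part (1) is the same argument as the paper's: Weyl's inequality pairs the $j$-th eigenvalues of $D$ and $\tilde D$ to within $\epsilon$, which moves $(-\epsilon,\epsilon)$ into $(-2\epsilon,2\epsilon)$. Part (2), however, takes a genuinely different route. The paper first truncates: it builds a Hermitian $\hat A$ with $\|\hat A-A\|\le\epsilon$ whose kernel has dimension ${\mathcal C}_\epsilon(A)$ and which has no other spectrum in $(-\epsilon,\epsilon)$, then applies Sylvester's law of inertia to get $\mathrm{nul}(\hat A)\le \mathrm{nul}(B\hat A B)$, and finally reuses part (1) via $\|BAB-B\hat A B\|\le\epsilon$ (this needs $\|B\|\le1$) to conclude ${\mathcal C}_\epsilon(BAB)\ge \mathrm{nul}(B\hat A B)\ge {\mathcal C}_\epsilon(A)$. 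You instead work on the complementary count: you take the span $\mathcal M$ of the eigenvectors of $BAB$ with eigenvalue of modulus $\ge\epsilon$, observe $\|AB\psi\|\ge\|BAB\psi\|\ge\epsilon\|\psi\|\ge\epsilon\|B\psi\|$, check that $B$ is injective on $\mathcal M$ (the one delicate point, which you handle correctly), and feed $B\mathcal M$ into Courant--Fischer for $A^2$. Both proofs are short; yours is self-contained and purely variational, avoiding the inertia theorem and the truncation device, while the paper's version recycles part (1) and the same $\epsilon$-truncation construction that reappears in the proof of Lemma \ref{lem:prel3}, so it keeps the toolkit uniform across that section. One small bookkeeping remark: since ${\mathcal C}_\epsilon$ counts eigenvalues in the open interval $(-\epsilon,\epsilon)$, your complementary sets are defined by the non-strict inequality $|\lambda|\ge\epsilon$, and your inequalities are consistent with that convention.
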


\begin{lemma}\label{lem:prel3}
Suppose $D=D^*\in M_{n,n}$ is of the form
\[D=\left[\begin{array}{cc}
A & V \\
 V^* &B\\
\end{array}\right]\,,\] with $A\in M_{k,k},\ B\in M_{m,m}$,
 $\|V\|\le 1/2$, and
\begin{equation}\label{eq:conde}{\mathcal C}_{2\epsilon}(B)=0\,.\end{equation}
Then
\begin{equation}\label{eq:1}
{\mathcal C}_{\epsilon}\left(D/B\right)\le {\mathcal C}_{\epsilon}\left(D\right)\,;
\end{equation}
\begin{equation}\label{eq:2}
{\mathcal C}_{\epsilon}\left(D\right)\le {\mathcal C}_{\beta\epsilon}\left(D/B\right) \,,
\end{equation} with
$\beta=2(\|B^{-1}\|+1)^2$.
\end{lemma}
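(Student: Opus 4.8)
The plan is to prove both bounds with one device: pass between the spectral subspace of $D$ attached to $(-\epsilon,\epsilon)$ and the analogous subspace of its Schur complement $S:=D/B=A-VB^{-1}V^{*}$, using the two quantitative facts packaged in ${\mathcal C}_{2\epsilon}(B)=0$, namely that $B$ is invertible with $\|By\|\ge 2\epsilon\|y\|$ for all $y$ and, equivalently, that $2\epsilon\|B^{-1}\|\le 1$. Writing vectors of $\C^{k}\oplus\C^{m}$ as $v=(x,y)$, the single computation everything rests on is the block identity
\[D\big(x,\,-B^{-1}V^{*}x\big)=(Sx,\,0),\]
checked by multiplying out. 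I will also use the elementary remark that a subspace $\mathcal W$ of dimension $j$ on which $\|Tw\|<r\|w\|$ for all $w\in\mathcal W\setminus\{0\}$ forces ${\mathcal C}_{r}(T)\ge j$, since then $P_{r}(T)$ is injective on $\mathcal W$ (one has $\|Tw\|^{2}\ge r^{2}\|(I-P_{r}(T))w\|^{2}$).

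For \eqref{eq:1} I take $W_{S}$, the spectral subspace of $S$ for $(-\epsilon,\epsilon)$, of dimension $j={\mathcal C}_{\epsilon}(S)$, so $\|Sx\|<\epsilon\|x\|$ on $W_{S}\setminus\{0\}$. Its image $\mathcal U$ under the injective map $x\mapsto(x,-B^{-1}V^{*}x)$ is $j$-dimensional, and on it the identity above gives $\|Dv\|=\|Sx\|<\epsilon\|x\|\le\epsilon\|v\|$; hence ${\mathcal C}_{\epsilon}(D)\ge j$. (When $D$ is invertible this is just Cauchy interlacing for $D^{-1}$ and its principal submatrix $S^{-1}$, as in \eqref{eq:Cauint}; the argument above needs no such hypothesis.)

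For \eqref{eq:2} I start from $U$, the spectral subspace of $D$ for $(-\epsilon,\epsilon)$, so $\|Dv\|<\epsilon\|v\|$ for $v=(x,y)\in U\setminus\{0\}$, whence both block rows satisfy $\|Ax+Vy\|,\ \|V^{*}x+By\|<\epsilon\|v\|$. First, $(0,y)\in U$ forces $y=0$ (otherwise $\|Dv\|\ge\|By\|\ge 2\epsilon\|y\|=2\epsilon\|v\|$), so the first-coordinate projection is injective on $U$; thus $W:=\{x:(x,y(x))\in U\}$ has dimension $j={\mathcal C}_{\epsilon}(D)$ with $y(\cdot)$ linear. The decisive estimate is $\|y(x)\|\le(1+2\epsilon)\|B^{-1}\|\,\|x\|$: from $\|V^{*}x+By\|<\epsilon\|v\|\le\epsilon(\|x\|+\|y\|)$ and $\|V\|\le\tfrac12$ one gets $\|By\|-\epsilon\|y\|<(\tfrac12+\epsilon)\|x\|$, and since $\|By\|\ge\|y\|/\|B^{-1}\|$ while $2\epsilon\|B^{-1}\|\le1$ gives $1/\|B^{-1}\|-\epsilon\ge 1/(2\|B^{-1}\|)$, the bound follows. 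Rewriting $V^{*}x=(V^{*}x+By)-By$ turns the Schur complement into $Sx=(Ax+Vy)-VB^{-1}(V^{*}x+By)$, so $\|Sx\|<\epsilon\|v\|\big(1+\tfrac12\|B^{-1}\|\big)$; feeding in $\|v\|\le(1+(1+2\epsilon)\|B^{-1}\|)\|x\|$ and using $2\epsilon\|B^{-1}\|\le1$ to bound $1+(1+2\epsilon)\|B^{-1}\|\le 2(1+\|B^{-1}\|)$ and $1+\tfrac12\|B^{-1}\|\le 1+\|B^{-1}\|$, I obtain $\|Sx\|<2(1+\|B^{-1}\|)^{2}\epsilon\|x\|=\beta\epsilon\|x\|$ on $W\setminus\{0\}$. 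Therefore ${\mathcal C}_{\beta\epsilon}(S)\ge j={\mathcal C}_{\epsilon}(D)$, which is \eqref{eq:2}.

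The one place where something could go wrong is the bound on $\|y(x)\|$: a priori a low-energy eigenvector $(x,y)$ of $D$ could have $\|y\|$ as large as $\|x\|/\epsilon$, which would collapse the Schur-complement comparison. What rescues it is that ${\mathcal C}_{2\epsilon}(B)=0$ supplies not merely invertibility of $B$ but the scale-matched inequality $2\epsilon\|B^{-1}\|\le1$; this is exactly what keeps $\|y\|\lesssim\|B^{-1}\|\,\|x\|$ and yields the constant $\beta=2(\|B^{-1}\|+1)^{2}$ (in fact the argument gives the slightly better $2(1+\tfrac12\|B^{-1}\|)^{2}$).
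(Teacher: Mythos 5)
Your proof is correct, but it takes a genuinely different route from the paper's. You argue variationally: for \eqref{eq:1} you exhibit the test subspace $\{(x,-B^{-1}V^{*}x):x\in W_S\}$, on which the identity $D(x,-B^{-1}V^{*}x)=(Sx,0)$ makes $D$ small, and for \eqref{eq:2} you take the first-coordinate image of the spectral subspace of $D$ (injectivity of that projection coming from $\|By\|\ge 2\epsilon\|y\|$), control $\|y(x)\|$ through the scale-matched bound $2\epsilon\|B^{-1}\|\le 1$, and rewrite $Sx=(Ax+Vy)-VB^{-1}(V^{*}x+By)$; in both cases you conclude with the elementary fact that a $j$-dimensional subspace on which $\|T\cdot\|<r\|\cdot\|$ forces ${\mathcal C}_{r}(T)\ge j$, and all the individual estimates check out. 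The paper instead proves \eqref{eq:1} by quoting the interlacing theorem for Schur complements of Hermitian matrices (Lemma 2.3 in \cite{Z}) and proves \eqref{eq:2} perturbatively: it replaces $D$ by a truncation $\hat D$ whose eigenvalues in $(-\epsilon,\epsilon)$ are set to zero, applies the Haynsworth inertia additivity formula ${\rm nul}\,\hat D={\rm nul}\,\hat B+{\rm nul}(\hat D/\hat B)$, and then bounds $\|\hat D/\hat B-D/B\|\le\beta\epsilon/2$ via the resolvent identity together with Weyl's theorem from Lemma \ref{lem:prel1}. Your argument is self-contained (no auxiliary $\hat D$, no inertia formula, no Weyl step), does not need the side assumption $\epsilon\le\tfrac12$ that the paper invokes midway through its norm estimates, works without any invertibility of $D$ in \eqref{eq:1}, and even gives the marginally sharper constant $2\left(1+\tfrac12\|B^{-1}\|\right)^{2}$; what the paper's route buys is the reuse of Lemma \ref{lem:prel1} and of the inertia/perturbation machinery that recurs elsewhere in Section \ref{sec:proofs}, arriving at the same stated $\beta$.
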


Armed with these results, we can infer that
\begin{multline}\label{eq:nor}
{\mathcal C}_{\epsilon}\left(B\right)\,=\,{\mathcal C}_{\nu^2\epsilon}\left(\nu^2B\right)\,\le\, {\mathcal C}_{\nu^2\epsilon}\left(\nu^2TBT\right)\, =\,
{\mathcal C}_{\nu^2\epsilon}\left(W/W_{11}\right)\\\le\,
{\mathcal C}_{\nu^2\epsilon}\left(W\right)\,\le\,
{\mathcal C}_{\beta\nu^2\epsilon}\left(W/W_{22}\right)\,=\,
{\mathcal C}_{\beta\nu^2\epsilon}\left(\hat B\right) \,,
\end{multline}
where in the second step we have used Lemma \ref{lem:prel1} and in the remaining steps we have used  Lemma
\ref{lem:prel3}. Here
\[\beta\,=\,
2\left(\left\|\left(W_{22}\right)^{-1}\right\|\,+\,1\right)^2\,\le\,2\left(L+5\right)^2\,\le\,
25L^2\] for $L \geq 2$.  (It is straightforward to check that the relation ${\mathcal C}_{\epsilon}\left( B\right)\le {\mathcal C}_{\beta\nu^2\epsilon}\left(\hat B\right)$  holds for $L=1$.)  Plugging in the upper bounds for $\nu,\beta$ we get the second inequality in \eqref{eq:red}:
\[{\mathcal C}_{\epsilon}\left(B\right)\,\le\,{\mathcal C}_{7 L^2\epsilon}\left(\hat B\right).\]

On the other hand, let
\[U=\kappa \left[\begin{array}{cc}
B_2+aI_n & B_1-aI_n \\
B_1-aI_n & -B_1+aI_n\\
\end{array}\right];\quad \kappa=\frac{1}{2\|B_1-a\|}\,.\] Then
\begin{eqnarray*} U/U_{22}&=&\kappa B\,;
\\
-U/U_{11}&=&\kappa\left(B_1-a\,+\,
\left(B_1-a\right)\left(B_2+a\right)^{-1}\left(B_1-a\right)\right)\,.
\end{eqnarray*}
Similarly to \eqref{eq:nor}, we obtain
\begin{multline*}
 {\mathcal C}_{\nu^{2}\epsilon/(25L^2)}\left(\hat
B\right)\, =\,{\mathcal C}_{\epsilon/(25L^2)}\left(\kappa^{-1}T\left(U/U_{11}\right)T\right)\\ \le\,
{\mathcal C}_{\kappa\epsilon/(25L^2)}\left(U/U_{11}\right)\, \le\,
{\mathcal C}_{\kappa\epsilon/(25L^2)}\left(U\right)\\ \le\,
{\mathcal C}_{\kappa\beta\epsilon/(25L^2)}\left(U/U_{22}\right)\, \le\, {\mathcal C}_{\epsilon}(B)\,,
\end{multline*} with $T$ given by \eqref{eq:Top} and
\[\beta\,=\,
2\left(\left\|\left(U_{22}\right)^{-1}\right\|\,+\,1\right)^2\,\le\,2\left(\|B_1-a\|+1\right)^2\le 25L^2\,.\]
Since 
\[\frac{\nu^{2}\epsilon}{25L^2}\ge \frac{\epsilon}{25L^2(L+4)^2}\ge \frac{\epsilon}{225L^4},\] the first inequality in \eqref{eq:red} follows.

\end{proof}


\begin{proof}[Proof of Lemma \ref{lem:prel1}]
For the first part, we use the Weyl's theorem, cf. Theorem 4.3.1 in \cite{HJ}, which states that if
\[\sigma(A)=\{\lambda_i(A)\}_{i=1}^n\,,\ \ 
\sigma(B)=\{\lambda_i(B)\}_{i=1}^n\,, \ \ 
\sigma(A+B)=\{\lambda_i(A+B)\}_{i=1}^n\] for Hermitian $A,\ B$, with
the eigenvalues arranged in increasing order, then
\[\lambda_k(A)+\lambda_1(B)\le \lambda_k(A+B) \le
\lambda_k(A)+\lambda_n(B)\,,\quad k=1,...,n\,.\] Therefore, every
number $\lambda_k(A+B)$ which lies in the interval
$[-\epsilon,\epsilon]$  can be matched with
$\lambda_k(A)\in[-2\epsilon,2\epsilon]$, provided that
$\|B\|\le\epsilon$.

\vspace{.6cm}
\noindent
For the second part, observe that there exists a Hermitian matrix $\hat A$ such that
\begin{enumerate}
\item $\|\hat A- A\|\le \epsilon$;
\item
${\rm nul}(\hat A)={\mathcal C}_{\epsilon} (A)$;
\item $\hat A$ has no non-zero
eigenvalues in the interval $(-\epsilon,\epsilon)$.
\end{enumerate}
Then Sylvester's law of inertia implies that ${\rm nul}(\hat A)\le {\rm nul}(B\hat A B)$ (with equality in the case of nonsingular $B$). Since $\|BAB-B\hat AB\|\le \epsilon$ we can use Weyl's theorem again to conclude that \[{\mathcal C}_{\epsilon} (BAB)\ge {\rm nul} (B\hat AB)\ge{\rm nul}(\hat A)={\mathcal C}_{\epsilon} (A).\]
\end{proof}


\begin{proof}[Proof of Lemma \ref{lem:prel3}]
The relation \eqref{eq:1} follows from the interlacing theorem for
inverses of Hermitian matrices---see Lemma 2.3 in \cite{Z},
which is itself a simple consequence of the the Schur complement
formula and Cauchy interlacing theorem for Hermitian matrices.

To prove \eqref{eq:2} note that there exists a matrix \[\hat
D:=\left[\begin{array}{cc}
\hat A & \hat V \\
\hat  V^* &\hat B\\
\end{array}\right]\]  such that
\begin{enumerate}
\item\label{it:1} $\|\hat D- D\|\le \epsilon$;
\item
${\rm nul}\,\hat D={\mathcal C}_{\epsilon} (D)$;
\item $\hat D$ has no non-zero
eigenvalues in the interval $[-\epsilon,\epsilon]$.
\end{enumerate}
where ${\rm nul}\,\hat D$ is the multiplicity of the zero eigenvalue of $\hat D$, and equals zero if this eigenvalue is absent.  One can readily prove the existence of $\hat D$ by diagonalizing $D$ and replacing all eigenvalues less than or equal to $\epsilon$ with zeros.
Using the Haynsworth inertia additivity formula, we get
\[{\rm nul}\, \hat D={\rm nul}\, \hat B + {\rm nul}\, \left(\hat
D/ \hat B\right)\,.\] Observe that the condition \eqref{it:1} above
implies $\|\hat B- B\|\le \epsilon$. We can therefore
infer from \eqref{eq:conde} and Lemma \ref{lem:prel1} that
\begin{equation}\label{eq:hatB}
{\mathcal C}_{\epsilon}( \hat B )=0\,.
\end{equation}
As a result we obtain the equality
\begin{equation}\label{eq:hateql}{\mathcal C}_{\epsilon} (D)={\rm nul}\, \hat D={\rm nul}\, \left(\hat
D/ \hat B\right)\,.
\end{equation}
Note now that
\begin{align*}
&\|\hat V\hat B^{-1}\hat V^*-V B^{-1}V^*\|\ \notag\\
&\leq \  \|(\hat V-V)\|\cdot \|\hat B^{-1}\hat V^*\| + \ \|V\|\cdot\|\hat B^{-1}-
B^{-1}\|\cdot \|\hat V^*\|\ \notag\\
&\hspace{6 mm} + \ \|V\hat B^{-1}\|\cdot \|(\hat V^*-V^*)\| \notag\\
&\leq \epsilon \|\hat B^{-1}\| \left(\frac{1}{2}+\epsilon\right)
+ \frac{1}{2} \|\hat B^{-1} - B^{-1}\| \left(\frac{1}{2}+\epsilon\right)
+ \frac{1}{2} \|\hat B^{-1}\| \epsilon \notag\\
&= \left(\epsilon + \epsilon^2\right) \|\hat B^{-1}\| + \left(\frac{1}{2} \epsilon
+ \frac{1}{4}\right) \|\hat B^{-1} - B^{-1}\| \notag\\
&\leq \frac{3}{2}\epsilon \|\hat B^{-1}\|\ + \frac{1}{2}\|\hat B^{-1}- B^{-1}\| \,.
\end{align*}
(We have assumed that $\epsilon \leq \frac{1}{2}$.)  Using the first resolvent identity, we get the bound
\begin{equation}\label{eq:diffr}\|\hat B^{-1}- B^{-1}\| \ = \ \|\hat
B^{-1}(\hat B- B)B^{-1}\| \ \le \ \epsilon \|B^{-1}\|\cdot \|\hat
B^{-1}\|\ \,,
\end{equation}
which in turn implies the estimate
\[\|\hat B^{-1}\| \  \le \|B^{-1}\| \ + \ \epsilon
\|B^{-1}\|\cdot \|\hat B^{-1}\| \ \le \ 2\|B^{-1}\|\,, \] where we
have used \eqref{eq:hatB} in the last step. Inserting the last
inequality into the right-hand side of \eqref{eq:diffr}, we finally obtain
\[ \|\hat B^{-1}- B^{-1}\| \  \le \ 2
\epsilon \|B^{-1}\|^2\,.\]
As a result, we arrive at
\[\|\hat V\hat B^{-1}\hat V^*-V B^{-1}V^*\|\ \le \ \frac{3}{2}\epsilon
\|B^{-1}\|\ + \ \epsilon\|B^{-1}\|^2\,,\]
hence
\[\left\|\hat D/\hat B - D/ B\right\|\ \le \ \epsilon(1+\frac{3}{2}
\|B^{-1}\|\ + \ \|B^{-1}\|^2) \ <\ \epsilon(\|B^{-1}\|+1)^2\ =:\
\frac{\epsilon\beta}{2}\,.\]
Consequently,  we get
\[{\mathcal C}_{\epsilon}\left(D\right)\ = \ {\rm nul}\, \left(\hat
D/ \hat B\right) \ \le \ {\mathcal C}_{\frac{\beta\epsilon}{2}}\left(\hat D/
\hat B\right)\ \le \  {\mathcal C}_{\beta\epsilon}\left(D/ B\right)\,,
\]
where we have used \eqref{eq:hateql} in the first step and Lemma
\ref{lem:prel1} in the last one.
\end{proof}


\begin{proof}[Proof of Lemma \ref{lem:main}]
We use induction in $N$. For $N=1$ the result follows from {\bf(A)}. Suppose that \eqref{eq:main'} holds for $|{\mathcal V}|=N$. We wish to establish the induction step, i.e. \eqref{eq:main'} for $|{\mathcal V}|=N+1$. We can  evaluate  $\det\hat H_\omega$ using the Schur determinant formula. Namely, for $x\in \mathcal V$ let us denote by $\hat H^{(x)}_{\omega}$ denote the restriction of $\hat H_\omega$ to the site $x$. Then
\[\det\hat H_\omega=\det \hat H^{(x)}_{\omega}\det(\hat H_\omega/\hat H^{(x)}_{\omega})\] by Schur's determinant formula. Both determinants on the right-hand side are random, but the first one depends only on randomness associated with $A_\omega(x)$, a fact which we will exploit momentarily.  We note now that the Schur complement $\hat H_\omega/\hat H^{(x)}_{\omega}$ is by itself also of the form \eqref{eq:hatHd}  (with $\mathcal V$ replaced by ${\mathcal V}\setminus \{x\}$). Note that the  $H_0$ term in $\hat H_\omega/\hat H^{(x)}_{\omega}$ might depend on $A_\omega(x)$, but not on other random variables $\{A_\omega(y)\}$. By the induction hypothesis, we have
\be\label{eq:indhyp}{\mathbb P}\left(|\det(\hat H_\omega/\hat H^{(x)}_{\omega})|\le r\right)\le (2 K \alpha)^N\ln^N(r^{-1})r^{\alpha},\quad r\in[0,1].\ee

Let $S:=\{\omega:\ |\det\hat H_\omega|
\le \epsilon\}$, and let
\[F_\omega=|\det \hat H^{(x)}_{\omega}|,\quad G_\omega=|\det(\hat H_\omega/\hat H^{(x)}_{\omega})|.\]
We set $Q:=\{\omega:\ \min(F_\omega,G_\omega)\le \epsilon\}$, then by Assumption {\bf (A)} and the induction hypothesis
\be\label{eq:Q}\pr(Q)\le \left(K+ (2 K \alpha)^N\ln^N(\epsilon^{-1})\right)\epsilon^{\alpha}.
\ee
On the other hand, we have
\[\chi(S\smallsetminus Q)= \int_\epsilon^{1}\chi(sG_\omega\le\epsilon)\delta(F_\omega-s)ds.\]

Taking expectations on both sides and using \eqref{eq:indhyp}, we obtain
\begin{align}\label{eq:exp1}
	\E \chi(S\smallsetminus Q) &\le \E\int_\epsilon^{1} ds\,\delta(F_\omega-s)
		\E\left(\chi(sG_\omega\le\epsilon)\,\Big|\, A_\omega(x) \right) \notag\\
	& \le (2 K \alpha)^N\epsilon^{\alpha}\E\int_\epsilon^{1}
		\frac{\ln^N(\frac{s}{\epsilon}) \delta(F_\omega-s)}{s^{\alpha}}ds \notag \\
	& = (2 K \alpha)^N \epsilon^{\alpha}\E \,\frac{\ln^N(\epsilon^{-1}F_\omega)\chi(1>F_\omega>\epsilon)}{(F_\omega)^{\alpha}} \notag \\
	& \leq (2 K \alpha)^N \epsilon^{\alpha} \ln^N(\epsilon^{-1}) \E \,\frac{\chi(1>F_\omega>\epsilon)}{(F_\omega)^{\alpha}} .
\end{align}
  Using now  {\bf(A)} and the layer cake representation, we get
\begin{align}\label{eq:exp2}
	\E \,\frac{\chi(1>F_\omega>\epsilon)}{(F_\omega)^{\alpha}}
	& = \int^{\epsilon^{-\alpha}}_{1} \pr \left( (F_\omega)^{-\alpha}\ge t \right) d t
	\le  K \int^{\epsilon^{-\alpha}}_{1}  \frac{1}{t} d t  \notag\\ & = K\alpha \ln(\epsilon^{-1})
\end{align}
Combination of \eqref{eq:Q}, \eqref{eq:exp1}, and \eqref{eq:exp2} yields the induction step. \\
\end{proof}

\begin{proof}[Proof of Lemma \ref{lem:Either}]
We have
\begin{multline*}\left|\det \left(A+J \right)^{-1}\right|\\ =\left|\det (A-a)^{-1}\right|\,\left|\det(J+a)^{-1}\right|\,\left|\det \left((A-a)^{-1}+(J+a)^{-1} \right)^{-1}\right|.\end{multline*}
Suppose first that ${\mathcal C}_{16|a|}\left(\left((A-a)^{-1}+(J+a)^{-1}\right)^{-1}\right)=0$. According to \eqref{eq:ceprod}
\begin{multline*}{\mathcal C}_{16/|a|}\left((A-a)^{-1}\left((A-a)^{-1}+(J+a)^{-1} \right)^{-1}(A-a)^{-1}\right)\\ \le {\mathcal C}_{16/|a|}\left((|a|/2)^{-2}\left((A-a)^{-1}+(J+a)^{-1} \right)^{-1}\right)\\ ={\mathcal C}_{4|a|}\left(\left((A-a)^{-1}+(J+a)^{-1} \right)^{-1}\right)=0,\end{multline*}
where we have used $\|A-a\|\ge|a|/2$.
Since \[\|(A-a)^{-1}\|\le(|a|-1)^{-1},\] we can use \eqref{eq:Wood'} to decompose
\begin{multline*}\det \left(A+J \right)^{-1} = \\ \det \left((A-a)^{-1}\left((A-a)^{-1}+(J+a)^{-1} \right)^{-1}(A-a)^{-1}\right) \det \left(H-I\right),\end{multline*}
with
\[H=\left((A-a)^{-1}+(J+a)^{-1}\right)(A-a).\]
It follows that $\|H\|\le1/2$, and consequently $|\det \left(H-I\right)|\ge 2^k$. On the other hand, $\left|\det(A-a)\right|\le (|a|+1)^k$, and we can  conclude that
\be\label{eq:case1} \left|\det \left(A+J \right)^{-1}\right|\ge \left(2(|a|+1)^2\right)^{-k} \left|\det \left((A-a)^{-1}+(J+a)^{-1} \right)^{-1}\right|,\ee
whenever
${\mathcal C}_{16|a|}\left(\left((A-a)^{-1}+(J+a)^{-1} \right)^{-1}\right)=0$.

On the other hand, if ${\mathcal C}_{16|a|}\left(\left((A-a)^{-1}+(J+a)^{-1} \right)^{-1}\right)\neq 0$, then
\begin{multline*}\left\|\left((A-a)^{-1}+(J+a)^{-1} \right)^{-1}\right\|^{k-1}\\ \ge \frac{1}{16|a|}\left|\det \left((A-a)^{-1}+(J+a)^{-1}\right)^{-1}\right|.\end{multline*}
Hence
\begin{multline*}\left\|(A-a)^{-1}\left((A-a)^{-1}+(J+a)^{-1} \right)^{-1}(A-a)^{-1}\right\|\\ \ge (|a|+1)^{-2}\left\|\left((A-a)^{-1}+(J+a)^{-1} \right)^{-1}\right\|\\ \ge(|a|+1)^{-2}\left\{\frac{1}{16|a|}\left|\det \left((A-a)^{-1}+(J+a)^{-1} \right)^{-1}\right|\right\}^{1/(k-1)}.\end{multline*}
Using \eqref{eq:Wood'}, we conclude that
\begin{multline}\label{eq:case2}\left\|\left(A+J \right)^{-1}\right\|\\ \ge(|a|+1)^{-2}\left\{\frac{1}{16|a|}\left|\det \left((A-a)^{-1}+(J+a)^{-1} \right)^{-1}\right|\right\}^{1/(k-1)}\\ -(|a|-1)^{-1}\\ \ge\frac{1}{2}(|a|+1)^{-2}\left\{\frac{1}{16|a|}\left|\det \left((A-a)^{-1}+(J+a)^{-1} \right)^{-1}\right|\right\}^{1/(k-1)},\end{multline}
if ${\mathcal C}_{16|a|}\left(\left((A-a)^{-1}+(J+a)^{-1} \right)^{-1}\right)\neq 0$,
whenever  \eqref{eq:conda} holds.

Combining \eqref{eq:case1} and \eqref{eq:case2} we establish the assertion.
\end{proof}


\subsection*{Acknowledgments:}   We are grateful to Sasha Sodin for valuable comments pertaining to the main result. This work was supported in part by NSF grant DMS--1210982.


\begin{thebibliography}{99}
\bibitem{An}
P.W. Anderson, Absence of diffusion in certain random lattices. Phys. Rev.
{\bf 109}, 1492--1505 (1958).
\bibitem{BHS}
J. Bellissard, P. Hislop, and G. Stolz. Correlation estimates in the Anderson model.
J. Stat. Phys. {\bf 129}, 649--662, 2007.
\bibitem{B} J.~Bourgain,
An approach to Wegner's estimate using subharmonicity,
J.\ Stat.\ Phys. {\bf 134}, 969--978 (2009).

\bibitem{CE}
 Z.~Cao and  A.~Elgart,The weak localization for the alloy-type Anderson model on a cubic lattice. J. Stat. Phys.  {\bf 148}, 1006--1039 (2012)
\bibitem{CGK}
J.-M. Combes, F. Germinet, and A. Klein, Generalized eigenvalue-counting estimates for
the Anderson model. J. Stat. Phys. {\bf 135}, 201–216, 2009.
\bibitem{CGK'}
J.-M. Combes, F. Germinet, and A. Klein, Poisson statistics for
eigenvalues of continuum random Schr\"odinger operators, Anal. PDE {\bf 3} 49--80, (2010).
\bibitem{EK} A. Elgart, A. Klein,
Ground state energy of trimmed discrete Schr\"odinger operators and localization for trimmed Anderson models, J. Spectr. Theory, to appear.
\bibitem{ESS}
A. Elgart, M. Shamis and S. Sodin, Localisation for non-monotone
Schr\"odinger operators, JEMS, to appear.
\bibitem{FKS}
S. Fishman, Y. Krivolapov and A. Soffer, Perturbation Theory for the Nonlinear Schr\"odinger Equation with a random potential, Nonlinearity  {\bf 22}, 2861 (2009).
\bibitem{GM}
M. Gebert and P. M\"uller, Localization for random block operators
Oper. Theory Adv. Appl. {\bf 232}, 229--246 (2013).
\bibitem{GV}
G.-M. Graf and A. Vaghi, A remark on the estimate of a determinant by Minami.
Lett. Math. Phys. {\bf 79} 17–22, 2007.
\bibitem{HJ} R.A. Horn and C.R. Johnson. Matrix Analysis, Cambridge University Press, 1985.
\bibitem{K1}W. ~Kirsch, Wegner estimates and Anderson localization for alloy-type
potentials, Math. Z. {\bf 221} 507--512, 1996.
\bibitem{Kirsch} W. ~Kirsch,
An invitation to random Schr\"odinger operators, Random Schr\"odinger operators, Panor. Synth\`eses, vol. 25, Soc. Math. France, 2008, With an appendix by Fr\'ed\'eric Klopp, pp. 1--119.
\bibitem{KMM}
W. Kirsch, B. Metzger and P. M\"uller, Random block operators, J. Stat. Phys. {\bf 143}, 1035--1054 (2011).
\bibitem{K}
F. Klopp,  Inverse tunneling estimates and applications to the study of spectral statistics of random operators on the real line. J. Reine Angew. Math. (2012).
\bibitem{LPTV}
K. Leonhardt, N. Peyerimhoff, M. Tautenhahn, and I. Veseli\'c,
Wegner estimate and localization for alloy-type models with sign-changing exponentially decaying single-site potentials, \href{http://arxiv.org/abs/1309.0109} {arXiv:1309.0109}
\bibitem{Min}
N. Minami,  Local fluctuation of the spectrum of a multidimensional Anderson
tight binding model. Commun. Math. Phys. {\bf 177}, 709--725 (1996).
\bibitem{Mo}
S.~Molchanov, The local structure of the spectrum of the one-dimensional
Schr\"o\-dinger operator. Commun. Math. Phys.
{\bf 78}, 429--446 (1981).
\bibitem{S}
G.~Stolz,
An introduction to the mathematics of {A}nderson localization.
Entropy and the quantum {II}, Contemp. Math. {\bf 552}, pp. 71--108, Amer. Math. Soc., Providence, RI (2011).
\bibitem{TV}
M. Tautenhahn and I. Veseli\'c, Minami's Estimate: Beyond Rank One
Perturbation and Monotonicity, Ann. Henri Poincar\'e
{\it Online First} (2013).
\bibitem{W}
F. Wegner, Bounds on the density of states in disordered
systems, Z. Phys. B {\bf 44}, 9--15 (1981).
\bibitem{Z} The {S}chur complement and its applications. Edited by Fuzhen Zhang. Numerical Methods and Algorithms, 4. Springer-Verlag, New York, 2005.
\end{thebibliography}
\end{document}